\documentclass[11pt,a4paper,reqno]{amsart}

\usepackage[T1]{fontenc}
\usepackage[utf8]{inputenc}
\usepackage{lmodern}
\usepackage{amsmath,amssymb,mathtools}
\usepackage{microtype}
\usepackage[colorlinks=true,linkcolor=blue,citecolor=blue,urlcolor=blue]{hyperref}
\hypersetup{
 pdftitle={Szeg\H{o}-type determinant asymptotics for generalized Hilbert matrices with edge eigenvalues},
 pdfauthor={Martin Gebert and Heinrich K\"uttler}
}
\usepackage{pgfplots}

\newtheorem{theorem}{Theorem}[section]
\newtheorem{proposition}[theorem]{Proposition}
\newtheorem{lemma}[theorem]{Lemma}
\newtheorem{corollary}[theorem]{Corollary}
\theoremstyle{remark}
\newtheorem{remark}[theorem]{Remark}

\newcommand{\Nzero}{\mathbb N_0}
\newcommand{\C}{\mathbb C}
\newcommand{\R}{\mathbb R}

\newcommand{\Hh}{\mathcal H}
\newcommand{\Tr}{\operatorname{Tr}}
\newcommand{\Ker}{\operatorname{Ker}}
\newcommand{\Ran}{\operatorname{Ran}}
\newcommand{\dist}{\operatorname{dist}}
\newcommand{\id}{I}
\newcommand{\cS}{\mathfrak S_1}
\DeclarePairedDelimiter{\abs}{\lvert}{\rvert}
\DeclarePairedDelimiter{\norm}{\lVert}{\rVert}
\newcommand{\minushalf}{{\nulldelimiterspace=0pt $\vcenter{\hbox{$\scriptstyle-\frac12$}}$}}
\newcommand{\plushalf}{{\nulldelimiterspace=0pt $\vcenter{\hbox to .5em{\hss$\scriptstyle\frac12$\hss}}$}} % digit-wide box: centered under 0 and 1 on the y axis
\newcommand{\minushalfy}{{\nulldelimiterspace=0pt $\vcenter{\hbox{$\scriptstyle-\hbox to .5em{\hss$\scriptstyle\frac12$\hss}$}}$}} % same, with minus sign, aligned like -1 on the y axis

\pgfplotsset{set layers}
\pgfplotsset{compat=1.18}

\let\<\langle
\let\>\rangle

\title[Szeg\H{o}-type asymptotics for generalized Hilbert matrices]{Szeg\H{o}-type determinant asymptotics for generalized Hilbert matrices with edge eigenvalues}
\author{Martin Gebert}
\address{(M.~Gebert) Munich, Germany}
\email{gebert@math.lmu.de}
\author{Heinrich K\"uttler}
\address{(H.~K\"uttler) Palo Alto, CA, USA}
\email{kuettler@math.lmu.de}
\date{}

\begin{document}

\begin{abstract}
We compute the large-$N$ asymptotics of the determinant of the
identity plus or minus a generalized Hilbert matrix.
For $N \in \mathbb{N}$ and $\delta \in \mathbb{R} \setminus \{-1, -2,
  \ldots\}$, let
\[
  H_N^\delta
  := \left(
    \frac{\sin((1+\delta)\pi)}{\pi(j+k+1+\delta)}
  \right)_{j,k=0}^{N-1}
\]
be the generalized Hilbert matrix. For $\delta$ not a half-integer, we prove the large-$N$ power-law
asymptotics of the determinant
\[
  \log \det(\id_N \pm H_N^\delta) = -\frac{\theta_\delta^2 \pm \theta_\delta}{2} \log N + O(1)
\]
where
\[
  \theta_\delta :=
  \begin{cases}
    \delta, & \delta < -\frac{1}{2}, \\
    \dfrac{1}{\pi} \arcsin\bigl(\sin(\delta\pi)\bigr), & \delta \geq -\frac{1}{2},
  \end{cases}
\]
and $\arcsin \colon [-1,1] \to [-\frac{\pi}{2}, \frac{\pi}{2}]$ denotes the principal branch of $\arcsin$. For $\delta \geq -\frac 12$ the asymptotics is known. The novelty of the present paper is the regime $\delta < - \frac 12$ and understanding the dichotomy in the decay exponent. This stems from the $\pm 1$ edge eigenvalues of the limiting operator appearing for $\delta < - \frac 12$.
Most notably, we obtain for $\delta < -\frac 12$
\[
  \log \det\bigl(\id_N - (H_N^\delta)^2\bigr) = -\delta^2\, \log
  N + O(1).
\]
Such determinants arise in the study of Anderson's orthogonality catastrophe.
\end{abstract}

\maketitle

\section{Introduction and results}

For $\alpha\notin\{0,-1,-2,\ldots\}$, we define
the bounded self-adjoint \emph{generalized Hilbert matrix} on
$\ell^2(\Nzero)$ by
\begin{equation}\label{eq:Halpha}
 \Hh_\alpha:=\left(\frac1{j+k+\alpha}\right)_{j,k\in\Nzero}
\end{equation}
 and its normalized form by
\begin{equation}\label{eq:Aalpha}
 A_\alpha:=\frac{\sin(\pi\alpha)}\pi\Hh_\alpha.
\end{equation}
For $\delta\in\mathbb R \setminus \{-1,-2,\ldots\}$, let
\begin{equation}
 H^\delta := A_{1+\delta}
\end{equation}
and for $N\in\mathbb N$, we denote its $N\times N$ restriction by
\begin{equation}\label{eq:def-Hdelta}
 H_N^\delta
 :=
 \Big(H^\delta(j,k)\Big)_{j,k=0}^{N-1}.
\end{equation}

Our goal is to understand the asymptotic behavior of the determinants
\begin{equation}\label{eq:def-Dpm}
 D_N^\pm(\delta):=\det(\id_N\pm H_N^\delta)
\end{equation}
as the matrix size $N\to\infty$, where $\id_N$ denotes the $N\times N$ identity. We note that $D_N^\pm(\delta)>0$ by Lemma~\ref{lem:positivity} (for $\delta\in\Nzero$ trivially, since then $H_N^\delta=0$) and therefore $\log D_N^\pm(\delta)$ is well-defined.

\begin{figure}
\centering
\begin{tikzpicture}
\begin{axis}[
    width=0.97\linewidth, height=5.2cm,
    xlabel={$\delta$},
    ylabel={$\theta_\delta$},
    xmin=-4, xmax=6, ymin=-4, ymax=1,
    xtick={-4,-3,...,6},
    minor x tick num=1,
    extra x ticks={-0.5},
    extra x tick labels={\minushalf},
    extra x tick style={grid=none},
    ytick={-4,-3,-2,-1,0,1},
    extra y ticks={-0.5,0.5},
    extra y tick labels={\minushalfy,\plushalf},
    axis lines=left,
    axis line style={gray!60},
    tick style={gray!60},
    tick label style={font=\footnotesize},
    x tick label style={text height=1.5ex, text depth=.25ex},  % uniform boxes -> aligned baselines
    grid=both,
    major grid style={gray!25},
    minor grid style={gray!12},
    ylabel style={rotate=-90},
    clip=false,
]

% Shade the region delta < -1/2 (the non-folded branch)
\addplot[draw=none, fill=black!5, forget plot, on layer=axis background]
    coordinates {(-4,-4) (-0.5,-4) (-0.5,1) (-4,1)} \closedcycle;

% The function is piecewise linear, so exact vertices beat sampling.
% delta <= 1/2 : the line delta    (the two branches agree on [-1/2, 1/2])
% delta >  1/2 : triangle wave, reflecting at every half-integer
% Domain is [-4, 6]; the endpoint (6, 0) falls mid-tooth.
\addplot[very thick, blue!55!black] coordinates {
    (-4, -4) (0.5, 0.5)
    (1.5, -0.5) (2.5, 0.5) (3.5, -0.5) (4.5, 0.5)
    (5.5, -0.5) (6, 0)
};

\end{axis}
\end{tikzpicture}
\caption{$\theta_\delta$ as a function of $\delta$}
\label{fig:effective-phase}
\end{figure}

Define the pole and critical sets
\begin{equation}\label{eq:exceptional-sets}
 \mathcal P:=\{-1,-2,\ldots\}
 \qquad \text{and} \qquad
 \mathcal C:=\frac12+\mathbb Z.
\end{equation}

Our main result is the following:

\begin{theorem}\label{thm:main}
Let $\delta\in\mathbb R\setminus(\mathcal P\cup\mathcal C)$. We define an $\arcsin$-\,$\sin$ regularization of $\delta$ by
\begin{align}\label{eq:def-sq}
 \delta_{\text{reg}}:=&\frac1\pi\arcsin\bigl(\sin(\delta\pi)\bigr)
 \in\left[-\frac12,\frac12\right]
\end{align}
where $\arcsin:[-1,1]\to[-\pi/2,\pi/2]$ is the principal inverse sine. With the latter, we further define
\begin{equation}\label{eq:effective-phase}
 \theta_\delta:=
 \begin{cases}
  \delta,&\delta < -\frac12,\\[1mm]
  \delta_{\text{reg}},&\delta \geq -\frac12.
 \end{cases}
\end{equation}
Then, we have the following asymptotic behavior as $N\to\infty$
\begin{equation}\label{eq:strong-main}
 \log D_N^\pm(\delta)
 =-\frac{\theta_\delta^2\pm\theta_\delta}{2}\log N+O_\delta(1).
\end{equation}
\end{theorem}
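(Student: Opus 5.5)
The plan is to reduce the new regime $\delta<-\frac12$ to the known regime $\delta\ge-\frac12$. The tool is the Hankel structure of $H^\delta$ together with a Schur complement, whose determinant should carry the extra power of $N$.

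\emph{Step 1: shift identity.} $H^\delta$ is a Hankel matrix with symbol sequence
\[
  c_n^\delta=\frac{\sin((1+\delta)\pi)}{\pi(n+1+\delta)} .
\]
Since $\sin((1+\delta+m)\pi)=(-1)^m\sin((1+\delta)\pi)$, we have $c^{\delta}_{n+m}=(-1)^m c^{\delta+m}_n$. Take $\delta<-\frac12$ with $\delta\notin\mathcal P\cup\mathcal C$, and choose $a\in\mathbb N$ with $\delta':=\delta+2a\in(-\frac12,\frac32)$. Restricting $H^\delta_N$ to the index block $\{a,\dots,N-1\}$ gives exactly $H^{\delta'}_{N-a}$, because the sign $(-1)^{2a}=1$.

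\emph{Step 2: Schur complement.} Decompose $\id_N\pm H_N^\delta$ with respect to $\{0,\dots,a-1\}\oplus\{a,\dots,N-1\}$ as
\[
  \begin{pmatrix}A&B\\ B^T&D\end{pmatrix},\qquad D=\id_{N-a}\pm H^{\delta'}_{N-a}.
\]
Then
\[
  D_N^\pm(\delta)=D_{N-a}^\pm(\delta')\,\det\bigl(A-BD^{-1}B^T\bigr).
\]
Here $D$ is invertible by Lemma~\ref{lem:positivity}. The first factor is covered by the known asymptotics, with exponent $-\frac{\theta_{\delta'}^2\pm\theta_{\delta'}}{2}$. Note that $\theta_{\delta'}=\delta'_{\text{reg}}$, and $\delta'_{\text{reg}}=\delta'$ when $\delta'\le\frac12$. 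When $\delta'\in(\frac12,\frac32)$ one has $\delta'_{\text{reg}}=1-\delta'$, which differs from $\delta'$. To avoid this I would choose $a$ so that $\delta'\in(-\frac12,\frac12)$ whenever possible. Otherwise I would use an odd shift and absorb the sign $(-1)$ into swapping $D^+\leftrightarrow D^-$.

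With $\theta_{\delta'}=\delta'=\delta+2a$, a direct computation shows that the $a\times a$ Schur complement must satisfy
\[
  \log\det\bigl(A-BD^{-1}B^T\bigr)=\bigl(2a\delta+2a^2\pm a\bigr)\log N+O(1).
\]
This is the claim to prove.

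\emph{Step 3: asymptotics of the Schur complement.} The columns of $B^T$ are the vectors $v_i(k)=c^{\delta}_{i+k+a}$, $i<a$, which decay like $k^{-1}$. I would analyse $\langle v_i,D^{-1}v_j\rangle$ in two pieces.

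First, compute them against the infinite operator $\id\pm H^{\delta'}$. Diagonalize $H^{\delta'}$ (approximately) by the Mellin transform, under which it becomes multiplication by $\frac{\sin((1+\delta')\pi)}{\cosh(\pi\xi)}$-type symbols.

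Second, control the truncation to size $N-a$. Since $v_i\notin\ell^1$ and $\pm1$ is an edge of the spectrum of the limit operator, $D^{-1}v_i$ has slowly decaying profile $k^{-1+\theta}$-type tails, so the quadratic forms grow like powers of $N$. To get the exact powers and a nonzero leading determinant, I would exploit the Cauchy structure. The extended matrix $\bigl(1/(x_j+y_k)\bigr)$ has explicit Cauchy determinants and inverses. Perturbation by $\pm\id$ could be handled through the continuum (Wiener--Hopf/Mellin) model on $[1,N]$, where the resolvent has explicit power-law solutions $x^{-1/2\pm i\xi}$.

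\emph{Main obstacle.} The hardest part is this Schur complement asymptotics. Individual entries will grow with different powers $N^{\alpha_i+\alpha_j}$, so the leading determinant could cancel. I would first handle $a=1$, a scalar computation, by comparing with the continuum kernel. For general $a$, I would factor out $\operatorname{diag}(N^{\alpha_i})$ and show that the rescaled limit matrix is a nondegenerate Cauchy/Gram matrix, hence has positive determinant. Alternatively, I would iterate the $a=1$ step, shifting $\delta$ by $2$ at a time, while keeping the $O(1)$ errors uniform. This iteration is the route I would try first.
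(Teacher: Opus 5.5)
\textbf{There is a genuine gap in Step~3.} Your Steps 1 and 2 are a correct reduction, and they mirror the paper's structure. The factor $D^\pm_{N-a}(\delta')$ carries the same exponent as the paper's bulk determinant $\det(P_N\widetilde T^{\,\pm}P_N)$. Your Schur complement $S_N:=A-BD^{-1}B^T$ plays the role of the edge determinant $\det M_N^\pm$.

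A minor point first. Deleting $a$ rows and the same $a$ columns always shifts $\delta$ by $2a$. Deleting unequal numbers destroys the $\id\pm$ structure, so there is no ``odd shift''. When $\delta+2a\in(\frac12,\frac32)$ you simply have $\theta_{\delta'}=1-\delta'$ and a different target exponent for $S_N$; this is harmless.

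Step~3 is the entire new content of the theorem, and you give no argument for it, only a list of tools. The heuristic behind it is also wrong.
\begin{itemize}
\item Since $\delta'\ge-\frac12$ is not a half-integer, $H^{\delta'}$ has no eigenvalues and $\operatorname{spec}(H^{\delta'})\subseteq[-\abs{s},\abs{s}]$ with $\abs{s}<1$. Hence $D\ge 1-\abs{s}$ uniformly in $N$, and $\pm1$ is not at the edge of anything $D^{-1}$ sees.
\item As $v_i\in\ell^2$, the entries $\langle v_i,D^{-1}v_j\rangle$ stay bounded. They converge to those of the infinite Schur complement $S_\infty=A-B(\id\pm H^{\delta'})^{-1}B^T$.
\item The kernel of $S_\infty$ is isomorphic to $\Ker(\id\pm H^\delta)$ via $x\mapsto(x,-(\id\pm H^{\delta'})^{-1}B^Tx)$. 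For $\delta'\in(-\frac12,\frac12)$ one has $\abs{J^\pm_\delta}=a$, so in fact $S_\infty=0$.
\end{itemize}
So $\det S_N\to0$ not because entries grow at different rates (they do not grow at all). It tends to zero because $S_N$ converges to a singular, often zero, matrix, and the exponent is the \emph{rate} of this convergence along $\Ker S_\infty$. Neither the Mellin/Cauchy-determinant sketch nor rescaling by $\operatorname{diag}(N^{\alpha_i})$ addresses this rate. The $a=1$ iteration is worse: for $\delta+2<-\frac12$ the block $D$ itself has near-kernel directions, and $D^{-1}$ is no longer uniformly bounded.

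\textbf{The missing idea} is a mechanism tying the small eigenvalues of $S_N$ to the tails of the explicit kernel vectors $u_j$. In your framework it is available. You have $x^*S_Nx=\min_y\langle(x,y),(\id\pm H^\delta)(x,y)\rangle$, the minimum taken over $y$ supported in $\{a,\dots,N-1\}$. You also have $(1-\abs{s})(\id-\Pi^\pm)\le\id\pm H^\delta\le2(\id-\Pi^\pm)$. With some work, these make $\det S_N$ comparable to $\det(V^*Q_NV)$.

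The paper does this directly and without shifting:
\begin{itemize}
\item It factors $D_N^\pm(\delta)=\det(P_N\widetilde T^{\,\pm}P_N)\det M_N^\pm$ with $\widetilde T^{\,\pm}=T^\pm+\Pi^\pm$ (Lemma~\ref{lem:factorization}).
\item It treats the bulk factor as a trace-class perturbation of $\id\pm\frac s\pi\Hh_1$, using Lemma~\ref{lem:stable-traceclass} and Otte's theorem.
\item It proves the two-sided bound $cV^*Q_NV\le M_N^\pm\le CV^*Q_NV$ by writing $x^*M_Nx=\dist(Vx,\widetilde T^{1/2}P_N\ell^2)^2$.
\item It reads off the exponent $\sum_{j\in J^\pm_\delta}(2\delta+2j+1)$ from $u_j(n)=c_jn^{\delta+j}(1+O(n^{-1}))$ and the nondegenerate Gram matrix of $x^{\delta+j}$ on $(1,\infty)$.
\end{itemize}
Without some such comparison, your Step~3 proves nothing beyond the already known regime $\delta\ge-\frac12$.
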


\begin{remark}
(1) For $\delta \geq -\frac12$, the asymptotics is already
proven in \cite{Otte}; with the weaker remainder $o(\log N)$ it was also
obtained earlier, in the more general framework of Hankel matrices whose symbol
has jump discontinuities \cite{FedeleGebert}. At a high level, the asymptotics arise from the band of absolutely continuous spectrum between \(0\) and \(\sin((1+\delta)\pi)\), see Section~\ref{sec:spectrum}. We call this the \emph{bulk contribution} to the asymptotics.

(2) The novelty of the present
paper is the case $\delta < -\frac12$, where the operator $H^\delta$, in addition to the absolutely continuous spectrum, acquires
eigenvalues at $\pm1$; for its spectral structure, see
Section~\ref{sec:spectrum}. The main content of the paper is to control the
effect of these edge eigenvalues on $D_N^\pm(\delta)$. We call this the \emph{edge contribution}. Notably, the bulk and edge contributions
 combined reproduce the same closed expression
\eqref{eq:strong-main}, now with the unfolded exponent $\theta_\delta=\delta$
(rather than $\delta_{\mathrm{reg}}$).

(3) Note that the exponent $\delta\mapsto\theta_\delta$ is a continuous function. See
Figure~\ref{fig:effective-phase} for an illustration.
Moreover,
\begin{equation}\label{eq:delta-reg-s}
  \delta_{\text{reg}}
    = \frac1\pi\arcsin\bigl(\sin(\delta\pi)\bigr)
    = - \frac1\pi\arcsin\bigl(\sin((1+\delta)\pi)\bigr)
\end{equation}
and $\abs{\delta_{\text{reg}}}=\dist(\delta,\mathbb Z)$.
\end{remark}

\begin{corollary}\label{cor:main-sq}
Let $\delta\in\mathbb R\setminus(\mathcal P\cup\mathcal C)$. Then
\begin{equation}\label{eq:strong-main-sq}
 \log\det\bigl(\id_N-(H_N^\delta)^2\bigr)
 =-\theta_\delta^2\log N+O_\delta(1)
\end{equation}
as $N\to\infty$.
\end{corollary}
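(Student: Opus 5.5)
The corollary follows directly from Theorem~\ref{thm:main} through a factorization of the determinant. Since $H_N^\delta$ is an $N\times N$ matrix, the matrices $\id_N+H_N^\delta$ and $\id_N-H_N^\delta$ commute, and
\[
 \id_N-(H_N^\delta)^2=(\id_N-H_N^\delta)(\id_N+H_N^\delta).
\]
Multiplicativity of the determinant then gives
\[
 \det\bigl(\id_N-(H_N^\delta)^2\bigr)=D_N^-(\delta)\,D_N^+(\delta).
\]
Both factors are strictly positive by Lemma~\ref{lem:positivity}, so the product is positive. Its logarithm is therefore well defined and splits as
\[
 \log\det\bigl(\id_N-(H_N^\delta)^2\bigr)=\log D_N^+(\delta)+\log D_N^-(\delta).
\]

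Next, I apply \eqref{eq:strong-main} with each choice of sign. The hypothesis $\delta\in\mathbb R\setminus(\mathcal P\cup\mathcal C)$ is exactly that of the theorem. Adding the two expansions, the terms linear in $\theta_\delta$ cancel:
\[
 -\frac{\theta_\delta^2+\theta_\delta}{2}\log N-\frac{\theta_\delta^2-\theta_\delta}{2}\log N=-\theta_\delta^2\log N.
\]
The two remainders are each $O_\delta(1)$, so their sum is $O_\delta(1)$. This yields \eqref{eq:strong-main-sq}.

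There is no genuine obstacle here. The only point requiring care is the positivity of $D_N^\pm(\delta)$, which justifies taking logarithms separately. It is supplied by Lemma~\ref{lem:positivity}, and is trivial for $\delta\in\Nzero$, where $H_N^\delta=0$. Note that in the regime $\delta<-\frac12$ the corollary gives $-\delta^2\log N$, as advertised in the abstract.
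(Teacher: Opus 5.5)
Your proposal is correct and is essentially the argument the paper intends; the paper states the corollary without a written proof, as an immediate consequence of Theorem~\ref{thm:main}. You factor $\id_N-(H_N^\delta)^2=(\id_N-H_N^\delta)(\id_N+H_N^\delta)$, use Lemma~\ref{lem:positivity} (or triviality for $\delta\in\Nzero$) to split the logarithm, and add the two sign cases of \eqref{eq:strong-main}, in which the terms linear in $\theta_\delta$ cancel.
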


\begin{remark}[Critical half-integers]
At the critical half-integers $\delta\in\mathcal C$ excluded from
Theorem~\ref{thm:main}, the strong remainder $O_\delta(1)$ degrades but the
leading term survives:
\begin{equation}\label{eq:critical-weak}
 \log D_N^\pm(\delta)
 =-\frac{\theta_\delta^2\pm\theta_\delta}{2}\log N+o(\log N).
\end{equation}
For $\delta\ge-\frac12$, this is proven in \cite{Otte}.
For $\delta<-\frac12$, the proof requires a separate threshold
argument, which we only sketch here to keep the presentation
concise: deleting finitely many initial rows and columns
reduces the question to the matrix $\id_N\pm H_N^{-1\mp1/2}$, and a
Schur-complement argument shows that the decay of its smallest eigenvalue
changes the logarithm of the determinant only by \(o(\log N)\), so that the
noncritical argument still applies.
\end{remark}

Our main motivation is to understand the contribution of the edge eigenvalues to
the determinant asymptotics. Determinants of this type have, however, already
appeared in a number of other mathematical contexts:

\begin{remark}\label{rem:motivation}
(1) \emph{Anderson's orthogonality catastrophe.} Determinants of the form
$D_N^\pm(\delta)$ appear in a phenomenon called
Anderson's orthogonality catastrophe (AOC), which describes the
asymptotic vanishing of ground-state overlaps of two non-interacting Fermi
gases, see \cite{GebertKuttlerMuller,GebertKuttlerMullerOtte}. The essential
task in AOC is to understand the precise $L\to\infty$ asymptotics of determinants
involving differences of spectral projections of
the form
\begin{equation} \label{eq:aoc-det}
    \det\bigl( \id - \mathbf 1_L ( \mathbf 1_{(-\infty, E]} (H_0) - \mathbf 1_{(-\infty, E]} (H_1) )^2 \mathbf 1_L\bigr),
\end{equation}
for appropriate pairs of Schr\"odinger operators $(H_0, H_1)$ which differ by a short-range scattering potential and
finite-volume projections $\mathbf 1_L$.

Products and differences of such spectral projections
are closely related to Hankel operators \cite{FrankPushnitski}, and the generalized Hilbert matrix
treated here can serve as a model operator, since it often plays a key
role in the analysis. For example, the generalized Hilbert matrix already appeared in the
study of Krein's example which is a simple pair of such spectral
projections, see \cite{KostrykinMakarov}. Powers of the Hilbert matrix
also yield the coefficients of a series expansion of the AOC decay
exponent in \cite{GebertKuttlerMullerOtte}.

In the direct context of AOC, the determinant $\det(\id_N-(H_N^\delta)^2)$ appeared
in \cite{KnoerrOtteSpitzer}. Moreover, AOC for Dirac-$\delta$ perturbations is
treated in \cite{Gebert2015}, where the same phase transition of the decay
exponent was observed as in \eqref{eq:effective-phase}. In this case, $-\pi\delta$ is the scattering phase shift.

To prove the $L$-asymptotics of \eqref{eq:aoc-det}, one could follow the strategy of this paper and decompose the determinant into a bulk and an edge determinant. The bulk asymptotics is known \cite{FrankPushnitski}; the edge contribution would require the tail asymptotics of the eigenfunctions corresponding to the edge eigenvalues $\pm1$ to determine the exact AOC asymptotics.

(2) \emph{Appearance in random matrix theory.} The determinant
$\det(\id_N - H_N^{-1/2})$ equals the probability that a corank-one truncation of
a $(2N{+}1)$-dimensional random orthogonal matrix has no real eigenvalues. This
is closely related to the persistence probability that a Kac random polynomial
has no real zeros \cite{GebertPoplavskyi}.

(3) \emph{Asymptotics of Hankel, Toeplitz, and Hankel-plus-Toeplitz determinants}
is a classical subject. Since $\id_N$ is (trivially) Toeplitz and $H_N^\delta$
is Hankel, $D_N^\pm(\delta)$ is itself a Hankel-plus-Toeplitz determinant.
Classical formulas typically require a specific relation between the Toeplitz and
Hankel symbols \cite{BasorEhrhardt,DeiftItsKrasovsky}. More recent
Riemann--Hilbert approaches allow the two symbols to be distinct, but only under
regularity assumptions that are not satisfied in our setting
\cite{GharakhlooIts,GharakhlooItsII}.
\end{remark}

\section{The spectrum of the generalized Hilbert matrix}\label{sec:spectrum}

Let $\ell^2(\Nzero)$ have standard orthonormal basis $(e_j)_{j\ge0}$.  Throughout let $P_N$ be the orthogonal projection onto
$\operatorname{span}\{e_0,\ldots,e_{N-1}\}$ and put
\begin{equation}
    Q_N:=\id-P_N.
\end{equation}
We remark that
\begin{equation}\label{eq:compression}
 H_N^\delta=P_NA_{1+\delta} P_N\big|_{P_N\ell^2(\Nzero)}.
\end{equation}
If $\delta\in\Nzero$, then $\sin((1+\delta)\pi)=0$, so $H^\delta=0$ and
$D_N^\pm(\delta)=1$; we assume $\delta\notin\mathbb Z$ from now on.

The generalized Hilbert matrix $H^\delta = A_{1+\delta}$ can be diagonalized explicitly, see e.g., \cite{AlemanMontesSarafoleanu,RosenblumII,Silbermann}:

\begin{proposition}[Spectrum of generalized Hilbert matrix $H^\delta$]\label{prop:spectral-data}
Let $\delta\in\R\setminus\mathbb Z$ and set
$s:=\sin((1+\delta)\pi)$.  Then:

\begin{enumerate}
\item The absolutely continuous spectrum of $H^\delta$ is
\begin{equation}\label{eq:ac-spectrum}
 \operatorname{spec}_{\mathrm{ac}}(H^\delta)
 =[\min\{0,s\},\max\{0,s\}],
\end{equation}
with multiplicity one, i.e., there is one band of absolutely continuous spectrum.

\item The point spectrum of $H^\delta$ is contained in $\{-1,+1\}$.  Define
\begin{align}
 J^-_\delta
 &:=\left\{j\in 2\Nzero:0\leq j<-\delta-\frac12\right\},
 \label{eq:Jminus}\\
 J^+_\delta
 &:=\left\{j\in 2 \Nzero +1:0\leq j<-\delta-\frac12\right\}.
 \label{eq:Jplus}
\end{align}
For each $j\in J^-_\delta \cup J^+_\delta$, let $u_j$ be the Taylor coefficient vector of
\begin{equation}\label{eq:fj}
 f_j(z):=(1-z)^{-(1+\delta+j)}(1+z)^j.
\end{equation}
Then
\begin{equation}\label{eq:eigenvalue-fj}
 H^\delta u_j=(-1)^j u_j,
\end{equation}
and
\begin{equation}\label{eq:kernel-basis}
 \Ker(\id\pm H^\delta)
 =\operatorname{span}\{u_j:j\in J^\pm_\delta\},
\end{equation}
i.e., $J_\delta^-$ indexes the eigenvectors for the eigenvalue $+1$ and
$J_\delta^+$ those for the eigenvalue $-1$.

\item
The singular continuous spectrum is empty, i.e., the above describes the spectrum of $H^\delta$ completely.
\end{enumerate}
\end{proposition}

\begin{remark}
(1) \emph{Symbol of the generalized Hilbert matrix.} For a Hankel matrix $H:\ell^2(\Nzero)\to\ell^2(\Nzero)$, we call a bounded function
$f:\mathbb T\to\mathbb C$ a \emph{symbol of $H$} if
\begin{equation}
 H(j,k)=\hat f(j+k), \qquad j,k\in\Nzero,
\end{equation}
where $\mathbb T:=\{z\in\mathbb C:\abs{z}=1\}$ and
\begin{equation}
 \hat f(n):=\frac1{2\pi}\int_0^{2\pi}f(e^{it})e^{-int}\,dt, \qquad n\in\Nzero,
\end{equation}
denotes the $n$-th Fourier coefficient of $f$. For the operator $H^\delta$
studied here, with $\alpha:=1+\delta$, the function
\begin{equation}
 f(e^{it})=e^{i\alpha(\pi-t)}, \qquad t\in[0,2\pi),
\end{equation}
is a symbol. It is unimodular and continuous except for a single jump at
$z=1$, where
\begin{equation}
 f(1^\pm)=e^{\pm i\alpha\pi},\qquad
 \tfrac{1}{2i}\bigl(f(1^+)-f(1^-)\bigr)=\sin(\pi\alpha).
\end{equation}
By the theory of piecewise-continuous Hankel symbols, this jump
 accounts for the absolutely continuous spectrum \eqref{eq:ac-spectrum}, see \cite{Howland}.

(2) The operator $H^\delta$ is a self-adjoint contraction. If $\abs{s}<1$,
then $\operatorname{spec}(\id\pm H^\delta)\setminus\{0\}\subseteq[1-\abs{s},2]$;
in particular, $\id\pm H^\delta\ge1-\abs{s}$ on the orthogonal complement of
the kernel \eqref{eq:kernel-basis}.

(3) The asymptotic behavior of $u_j(n)$ as $n\to\infty$, derived in
Lemma~\ref{lem:eigenvector-tails} below, shows that
$u_j\in\ell^2(\Nzero)$. The $u_j$ form a basis of $\Ker(\id\pm H^\delta)$ but are neither normalized nor orthogonal.
\end{remark}

\begin{lemma}\label{lem:positivity}
Let $N\in\mathbb N$ and $\delta\in\mathbb R\setminus\mathbb Z$. Then
\begin{equation}
 \id_N\pm H_N^\delta>0,
\end{equation}
and consequently $D_N^\pm(\delta)=\det(\id_N\pm H_N^\delta)>0$, so that
$\log D_N^\pm(\delta)$ is well-defined.
\end{lemma}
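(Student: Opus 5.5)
Since $H_N^\delta=P_NH^\delta P_N|_{P_N\ell^2(\Nzero)}$ by \eqref{eq:compression} and $H^\delta$ is a self-adjoint contraction, we get $\id_N\pm H_N^\delta\ge0$ at once. Indeed, $\<v,(\id\pm H^\delta)v\>\ge0$ for every $v\in\Ran P_N$, because $\operatorname{spec}(H^\delta)\subseteq[-1,1]$ by Proposition~\ref{prop:spectral-data}. The real content is strict positivity. For that we must rule out a nonzero $v\in\Ran P_N$ with $\<v,(\id\pm H^\delta)v\>=0$.

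The plan is as follows. Suppose such a $v$ exists. Since $\id\pm H^\delta\ge0$ on all of $\ell^2(\Nzero)$, the vanishing of the quadratic form forces $(\id\pm H^\delta)^{1/2}v=0$, hence $(\id\pm H^\delta)v=0$. So $v\in\Ker(\id\pm H^\delta)$, and by \eqref{eq:kernel-basis} $v$ is a finite linear combination of the $u_j$ with $j\in J^\pm_\delta$. In particular $v$ is the Taylor coefficient vector of
\[
g(z)=(1-z)^{-(1+\delta)}\sum_{j\in J^\pm_\delta}c_j\Bigl(\frac{1+z}{1-z}\Bigr)^{j}=(1-z)^{-(1+\delta+m)}p(z),
\]
where $p$ is a polynomial and $m=\max J^\pm_\delta$. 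On the other hand, $v\in\Ran P_N$ means that $g$ is a polynomial of degree at most $N-1$.

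The key step is to show that $(1-z)^{-(1+\delta+m)}p(z)$ can be a polynomial only if $p=0$. If $g=q$ is a polynomial, then $p(z)=q(z)(1-z)^{1+\delta+m}$. Because $\delta\notin\mathbb Z$, the exponent $1+\delta+m$ is not an integer, so $(1-z)^{1+\delta+m}$ has a genuine branch point at $z=1$. Writing $q(z)=(1-z)^kq_1(z)$ with $q_1(1)\neq0$ gives $p(z)=(1-z)^{k+1+\delta+m}q_1(z)$. Near $z=1$ the left side is analytic, while the right side is not single-valued unless $q_1\equiv0$. Hence $q=0$ and therefore $v=0$, which is a contradiction.

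Thus $\id_N\pm H_N^\delta>0$, all its eigenvalues are positive, and $D_N^\pm(\delta)>0$. The only subtle point is the non-integrality argument; it is exactly where the hypothesis $\delta\notin\mathbb Z$ enters. When $J^\pm_\delta=\emptyset$ the kernel is trivial and the argument is immediate.
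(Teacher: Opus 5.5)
Your proof is correct, and up to the reduction to $v\in\Ker(\id\pm H^\delta)$ via the square root of $\id\pm H^\delta$ it is the paper's argument. The last step differs.

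\textbf{The paper's last step.} It writes $\psi=\sum_j\gamma_ju_j$ and picks the largest index $j^*$ with $\gamma_{j^*}\neq0$. It then uses the tail asymptotics \eqref{eq:coeff-asymptotic} to get $\psi(n)=\gamma_{j^*}c_{j^*}n^{\delta+j^*}(1+o(1))$ with $c_{j^*}=2^{j^*}/\Gamma(\delta+j^*+1)\neq0$. Hence $\psi(n)\neq0$ for all large $n$.

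\textbf{Your last step.} You pass to generating functions and show that $(1-z)^{-(1+\delta+m)}p(z)$ is never a nonzero polynomial, because of the non-integer branch point at $z=1$.

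\textbf{Comparison.} Your route is self-contained and purely qualitative. It needs no Gamma-ratio asymptotics, no forward reference, no choice of a leading index, and not even linear independence of the $u_j$. The paper's route is quantitative: it gives eventual nonvanishing of the coefficients and their decay rate. That costs nothing extra, since Lemma~\ref{lem:eigenvector-tails} is needed anyway for the edge determinant. In both, $\delta\notin\mathbb Z$ enters at the same point: $1/\Gamma(\delta+j^*+1)\neq0$ for the paper, non-integrality of the exponent for you.

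\textbf{Two small fixes.} First, make ``not single-valued'' precise. Since $q_1(1)\neq0$, $p/q_1$ is holomorphic near $z=1$, so it would give a holomorphic extension of $(1-z)^\beta$, $\beta=k+1+\delta+m\notin\mathbb Z$, across $z=1$. Restricting to $z=1-t$ with $t\downarrow0$ shows this is impossible, because $t^\beta$ is not a convergent power series in $t$. Second, rename your coefficients, since $c_j$ already denotes the constants in \eqref{eq:coeff-asymptotic}.
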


\begin{proof}
By \eqref{eq:compression}, $\langle\psi,(\id_N\pm H_N^\delta)\psi\rangle
=\langle\psi,(\id\pm H^\delta)\psi\rangle$ for any $\psi\in P_N\ell^2(\Nzero)$.

Suppose there exists $0\ne\psi\in P_N\ell^2(\Nzero)$ such that
\begin{equation}
    \langle\psi,(\id\pm H^\delta)\psi\rangle = 0.
\end{equation}
With
$B:=(\id\pm H^\delta)^{1/2}\ge0$, the above shows $\norm{B\psi}^2=0$ and $B\psi=0$, hence
$(\id\pm H^\delta)\psi=B(B\psi)=0$, i.e., $\psi\in\Ker(\id\pm H^\delta)$ and $H^\delta$ has an eigenvalue $\mp 1$ with compactly supported eigenfunction.

If $\delta\ge-\frac12$, Proposition~\ref{prop:spectral-data} shows that
$H^\delta$ has no eigenvalues at all, so $\Ker(\id\pm H^\delta)=\{0\}$ by
\eqref{eq:kernel-basis}, contradicting $\psi\ne0$.

If $\delta<-\frac12$, we show that there cannot be a $\mp 1$ eigenvalue with compact support. Note that \eqref{eq:kernel-basis} identifies
$\Ker(\id\pm H^\delta)$ with the span of $\{u_j:j\in J_\delta^\pm\}$. Writing
$\psi=\sum_{j\in J_\delta^\pm}\gamma_ju_j$ and
$j^*:=\max\{j\in J_\delta^\pm:\gamma_j\ne0\}$, \eqref{eq:coeff-asymptotic}
gives $\psi(n)=\gamma_{j^*}c_{j^*}n^{\delta+j^*}(1+o(1))$ with
$\gamma_{j^*}c_{j^*}\ne0$, so $\psi(n)\ne0$ for large $n$, contradicting
$\psi\in P_N\ell^2(\Nzero)$.
\end{proof}

\section{The bulk and the edge determinant}\label{sec:bulk-edge}

In this section we decompose the determinant $D_N^\pm(\delta)$ into two determinants:
\begin{itemize}
    \item[(i)] (Bulk determinant) A determinant involving a matrix which is based only on the absolutely continuous part of  $H^\delta$.
    \item[(ii)] (Edge determinant) A determinant involving (mainly) the eigenspaces of the edge eigenvalues $\pm1$ of $H^\delta$.
\end{itemize}
To do so, we now fix a noncritical parameter $\delta\in\mathbb R\setminus(\mathcal P\cup\mathcal C)$, set
\begin{equation}\label{eq:nonthreshold}
 s:=\sin((\delta +1)\pi),\qquad \abs{s}<1,
\end{equation}
and define
\begin{equation}\label{eq:Tpm}
 T^\pm:=\id\pm H^\delta,
 \qquad
 K^\pm:=\Ker T^\pm.
\end{equation}
Let $\Pi^\pm$ be the orthogonal projection onto $K^\pm$, and set
\begin{equation}\label{eq:Ttilde-pm}
 \widetilde T^{\,\pm}:=T^\pm+\Pi^\pm.
\end{equation}
By Proposition~\ref{prop:spectral-data}, the spectrum of
$\widetilde T^{\,\pm}$ lies in $[1-\abs{s},2]$, and thus
$\widetilde T^{\,\pm}\ge1-\abs{s}$.

We decompose $D_N^\pm(\delta)$ using the following lemma:

\begin{lemma}[Bulk and edge determinant factorization]\label{lem:factorization}
Let $T\ge0$ be bounded on $\ell^2(\Nzero)$.  Suppose that
$K:=\Ker T$ has finite dimension $m$ and that $T$ is bounded below by a
positive constant on $K^\perp$.  Let $\Pi$ be the orthogonal projection onto
$K$, set $\widetilde T:=T+\Pi$, and let $V$ be an isometry $\C^m\to K$.  Then
\begin{equation}\label{eq:edge-factorization}
 \det(P_NTP_N)
 =\det(P_N\widetilde TP_N)\det M_N,
\end{equation}
where
\begin{equation}\label{eq:def-MN}
 M_N:=\id_m-V^*P_N(P_N\widetilde TP_N)^{-1}P_NV
\end{equation}
and $\id_m$ denotes the identity on $\mathbb C^m$.
\end{lemma}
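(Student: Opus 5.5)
The plan is to write $T=\widetilde T-\Pi=\widetilde T-VV^*$ and compress. Since $V$ is an isometry onto $K$, we have $\Pi=VV^*$, so
\[
 P_NTP_N=P_N\widetilde TP_N-(P_NV)(P_NV)^*
\]
as operators on $\Ran P_N\cong\C^N$. This is a rank-$\le m$ perturbation of $P_N\widetilde TP_N$, and the identity \eqref{eq:edge-factorization} is then the matrix determinant lemma (Sylvester's identity $\det(\id_N-AB)=\det(\id_m-BA)$).

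\emph{Step 1: invertibility of $P_N\widetilde TP_N$.} By hypothesis, $T\ge c>0$ on $K^\perp$ and $T=0$ on $K$. Because $T$ is self-adjoint, $K$ and $K^\perp$ reduce $T$. Hence $\widetilde T=T+\Pi$ acts as $T\ge c$ on $K^\perp$ and as $\id$ on $K$, so $\widetilde T\ge\min\{c,1\}=:c'>0$. For $\psi\in\Ran P_N$ we get $\langle\psi,P_N\widetilde TP_N\psi\rangle=\langle\psi,\widetilde T\psi\rangle\ge c'\norm{\psi}^2$. Thus the finite matrix $P_N\widetilde TP_N|_{\Ran P_N}$ is positive definite, hence invertible. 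This justifies the inverse appearing in \eqref{eq:def-MN}, where it is understood on $\Ran P_N$.

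\emph{Step 2: factor out and apply Sylvester.} Set $A:=P_N\widetilde TP_N|_{\Ran P_N}$ and $W:=P_NV:\C^m\to\Ran P_N$. Then
\[
 \det(P_NTP_N)=\det(A-WW^*)=\det A\,\det(\id_N-A^{-1}WW^*).
\]
By $\det(\id_N-XY)=\det(\id_m-YX)$ with $X=A^{-1}W$ and $Y=W^*$, the second factor equals $\det(\id_m-W^*A^{-1}W)=\det M_N$, which is exactly the claim. The identity $\det(\id-XY)=\det(\id-YX)$ for $X\in\C^{N\times m}$, $Y\in\C^{m\times N}$ is standard; it follows from computing the determinant of the block matrix $\begin{pmatrix}\id_N & X\\ Y&\id_m\end{pmatrix}$ via both Schur complements.

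\emph{Main obstacle.} There is essentially no analytic difficulty. The only points needing care are that everything is finite-dimensional once restricted to $\Ran P_N$, and that $P_N\widetilde TP_N$ is invertible there. The latter is where the hypothesis that $T$ is bounded below on $K^\perp$ enters. Note that the argument does not need $\det M_N\neq0$ separately. If desired, one can deduce $\det M_N>0$ a posteriori whenever $P_NTP_N>0$, as guaranteed by Lemma~\ref{lem:positivity}.
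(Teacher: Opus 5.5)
Your proposal is correct and follows essentially the same route as the paper: write $\Pi=VV^*$, so that $P_NTP_N=P_N\widetilde TP_N-(P_NV)(P_NV)^*$, use the positive lower bound of $\widetilde T$ to invert $P_N\widetilde TP_N$, and conclude with Sylvester's identity $\det(\id-XY)=\det(\id-YX)$. One small remark: $\Pi=VV^*$ requires $V$ to map onto $K$, which holds automatically because an isometry $\C^m\to K$ with $\dim K=m$ is surjective.
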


\begin{proof}
The assumptions imply that $\widetilde T$ has a positive lower bound
and is therefore invertible. On $P_N(\ell^2(\Nzero))$, the
finite matrix $P_N\widetilde T P_N$ satisfies the same lower bound and is therefore invertible as well.
Since $\Pi=VV^*$, we obtain
\begin{align}
    P_NTP_N &= P_N\widetilde TP_N-(P_NV)(P_NV)^*\notag \\
    &= P_N\widetilde TP_N( \id - (P_N\widetilde TP_N)^{-1}(P_NV)(P_NV)^*).
\end{align}
Since $\det(\id - ABC) = \det(\id - CAB)$, we obtain \eqref{eq:edge-factorization} and \eqref{eq:def-MN}.
\end{proof}

Applying the above lemma to $T^\pm$, we obtain
\begin{equation}
    D_N^\pm(\delta) = \det(P_N\widetilde T^{\,\pm} P_N)\det M^\pm_N,
\end{equation}
where $M^\pm_N$ is the matrix \eqref{eq:def-MN} for some isometry
$V\colon\C^{m^\pm}\to K^\pm$ with $m^\pm:=\dim K^\pm$. An
explicit $V$ adapted to the edge eigenvectors is constructed in the proof of Theorem~\ref{edge-asymp}.

We call $\det(P_N\widetilde T^{\,\pm} P_N)$ the \emph{bulk determinant} and $\det( M^\pm_N)$ the \emph{edge determinant}. If $\delta \geq -\frac 12$ there are no edge eigenvalues, so $m^\pm=0$ and the edge determinant is trivially $1$ and can be ignored. We first determine the asymptotic behavior of the bulk determinant.

\section{The bulk determinant}

We prove the following
\begin{theorem}[Bulk asymptotics]\label{bulk-asymp}
Let $\delta\in\R\setminus(\mathcal P\cup\mathcal C)$, and let
$\widetilde T^{\,\pm}$ be as in \eqref{eq:Ttilde-pm}. Then, as $N\to\infty$,
\begin{equation}\label{eq:bulk-asymptotic}
 \log\det(P_N\widetilde T^{\,\pm}P_N)
 =-\frac{\delta_{\text{reg}}^2\pm\delta_{\text{reg}}}{2}\log N+O_\delta(1).
\end{equation}
\end{theorem}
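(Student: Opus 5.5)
Since $\widetilde T^{\,\pm}$ differs from $T^\pm=\id\pm H^\delta$ only by the finite-rank projection $\Pi^\pm$, the plan is to relate the bulk determinant to a determinant whose asymptotics is already known from \cite{Otte}, namely one with $\delta$ replaced by a parameter $\delta'\ge-\frac12$ that has the same regularization. For $\delta\ge-\frac12$ there is nothing to do: $\Pi^\pm=0$, $\delta_{\rm reg}=\theta_\delta$, and the claim is exactly the known result \cite{Otte}. So assume $\delta<-\frac12$.

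\textbf{Step 1: choose the reference parameter.} Let $\delta'$ be $\delta$ shifted by an even integer $2k$ into $[-\frac12,\frac32)$. Then $\sin((1+\delta')\pi)=s$, hence $\delta'_{\rm reg}=\delta_{\rm reg}$. Moreover $H^{\delta'}$ has no eigenvalues, so $\det(\id_N\pm H_N^{\delta'})$ has the asymptotics \eqref{eq:bulk-asymptotic} by \cite{Otte}.

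\textbf{Step 2: compare the two operators.} The entries of $H^\delta$ and $H^{\delta'}$ are $s/(\pi(j+k+1+\delta))$ and $s/(\pi(j+k+1+\delta'))$, and their difference is
\[
\frac{s(\delta'-\delta)}{\pi(j+k+1+\delta)(j+k+1+\delta')}.
\]
This is a Hankel matrix whose entries decay like $(j+k)^{-2}$, so it is trace class; the same holds for the analogous comparison with a shift of the index. I would therefore write
\[
\widetilde T^{\,\pm}=(\id\pm H^{\delta'})+R^\pm,\qquad R^\pm:=\pm(H^\delta-H^{\delta'})+\Pi^\pm\in\cS .
\]
Both $\id\pm H^{\delta'}$ and $\widetilde T^{\,\pm}$ are bounded below by $1-|s|$.

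\textbf{Step 3: factor off the perturbation.} The goal is to show
\[
\log\det(P_N\widetilde T^{\,\pm}P_N)-\log\det\bigl(P_N(\id\pm H^{\delta'})P_N\bigr)
=\log\det\bigl(\id+(P_N(\id\pm H^{\delta'})P_N)^{-1}P_NR^\pm P_N\bigr)
\]
converges as $N\to\infty$. The limit should be the Fredholm determinant $\det(\id+(\id\pm H^{\delta'})^{-1}R^\pm)$, which is nonzero because it equals $\det\bigl((\id\pm H^{\delta'})^{-1}\widetilde T^{\,\pm}\bigr)$ and both factors are invertible. Convergence needs two ingredients. First, $P_NR^\pm P_N\to R^\pm$ in $\cS$. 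Second, $(P_N(\id\pm H^{\delta'})P_N)^{-1}P_N$ converges strongly to $(\id\pm H^{\delta'})^{-1}$. The second holds because the finite-section inverses are uniformly bounded by $(1-|s|)^{-1}$, and strong convergence of finite sections is standard under a uniform lower bound. Strong convergence composed with trace-norm convergence then gives trace-norm convergence, so the difference of log-determinants is $O(1)$. This yields \eqref{eq:bulk-asymptotic}.

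\textbf{Main obstacle.} The delicate point is the identification in Step 1 of the correct comparison parameter with no eigenvalues. Note that $\delta'=\delta+2k$, not $\delta'=-\delta$ or some other reflection. In addition, \cite{Otte} must cover the full range $[-\frac12,\frac32)$ with an $O(1)$ remainder, including the sign convention matching $\pm$.

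If the citation is insufficient for part of that range, I would instead compare directly with $\delta_{\rm reg}$ shifted by an integer into $[-\frac12,\frac12]$. Any integer shift preserves $|s|$ but may flip the sign of $s$, and hence of $\delta_{\rm reg}$. I would therefore carefully choose the even shift, so that $s$, and thus the sign of the $\pm\delta_{\rm reg}$ term, is preserved.
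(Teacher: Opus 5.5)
Your proposal is correct, but it takes a different route from the paper in two places: the choice of comparison operator and the mechanism for comparing determinants.

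\textbf{Comparison operator.} The paper does not compare with another $\id\pm H^{\delta'}$. It compares $\widetilde T^{\,\pm}$ with $R^\pm=\id\pm\frac{s}{\pi}\Hh_1$. Only the coupling $s$ has to be matched. The shift parameter in $\Hh_\alpha$ can be moved freely at trace-class cost (Lemma~\ref{lem:trace-class-shift}, via positivity of $\Hh_a-\Hh_b$ and its finite diagonal sum). Hence \cite{Otte} is needed only at $\alpha=1$, $\beta=\mp s$, and your ``main obstacle'' disappears.

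Your even shift $\delta'=\delta+2k$ is fine: $s$ and $\delta_{\mathrm{reg}}$ are preserved, and $\delta'\notin\mathcal C$. It does, however, rely on \cite{Otte} covering $\alpha=1+\delta'\in[\frac12,\frac52)$, which the paper's remark asserts. Your fallback paragraph is muddled because you insist the reference operator have the form $\id\pm H^{\delta'}$. If the citation only covered $\alpha=1$, the fix is one more trace-class step $\Hh_{1+\delta'}\to\Hh_1$ at fixed $s$, which is exactly the paper's route.

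\textbf{Trace class.} Your claim from the $O((j+k)^{-2})$ decay is true but deserves a line. The trace norm is at most the sum of the row $\ell^2$-norms, which is $\lesssim\sum_j\bigl(\sum_{n\ge j}(n+1)^{-4}\bigr)^{1/2}\lesssim\sum_j(j+1)^{-3/2}<\infty$.

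\textbf{Determinant comparison.} You use continuity of the Fredholm determinant together with strong convergence of finite-section inverses. The latter should be justified: with $y=S^{-1}x$ one has $P_NSP_N\bigl((P_NSP_N)^{-1}P_Nx-P_Ny\bigr)=P_NSQ_Ny$, and $Q_Ny\to0$. This buys a stronger conclusion: the log-ratio actually converges to $\log\det\bigl((\id\pm H^{\delta'})^{-1}\widetilde T^{\,\pm}\bigr)$. The paper's Lemma~\ref{lem:stable-traceclass} instead applies Jacobi's formula along $(1-t)P_NRP_N+tP_NSP_N\ge c$. That yields the explicit uniform bound $c^{-1}\norm{S-R}_1$ with no limiting argument, which is shorter and is all that the $O(1)$ remainder requires.
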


We start with two auxiliary lemmas.

\begin{lemma}[Trace-class difference]\label{lem:trace-class-shift}
If $\alpha,\beta\in\mathbb R\setminus\{0,-1,-2,\ldots\}$, then
\begin{equation}\label{eq:trace-class-difference}
 \Hh_\alpha-\Hh_\beta\in\cS
\end{equation}
where $\cS$ stands for the trace class.
\end{lemma}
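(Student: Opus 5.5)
The entries of the difference are
\[
 (\Hh_\alpha-\Hh_\beta)(j,k)=\frac{\beta-\alpha}{(j+k+\alpha)(j+k+\beta)},
\]
so the plan is to factor this Hankel matrix as a product of two Hilbert--Schmidt operators. A single Hilbert--Schmidt bound is not enough, since it would only give $\mathfrak S_2$. The cleanest route is an integral representation.

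\emph{Step 1 (reduction to large parameters).} Shifting the parameter by an integer only changes finitely many rows and columns in a controlled way: $\Hh_{\alpha+1}$ is $\Hh_\alpha$ with its first row and column deleted. More precisely, $\Hh_{\alpha+1}=S^*\Hh_\alpha S$, where $S$ is the unilateral shift. Writing
\[
 \Hh_\alpha-\Hh_\beta=(\Hh_\alpha-\Hh_{\alpha+n})+(\Hh_{\alpha+n}-\Hh_{\beta+n})+(\Hh_{\beta+n}-\Hh_\beta),
\]
it suffices to treat two cases: pairs with $\alpha,\beta>1$, and pairs differing by an integer.

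For a pair differing by an integer, one checks that $\Hh_\alpha-S^*\Hh_\alpha S$ is not obviously finite rank. I would therefore instead choose $n$ large enough that $\alpha+n,\beta+n>1$ and handle every pair through the integral representation below, which only needs positivity of the parameters after a finite-rank modification. The matrices $\Hh_\alpha$ and $\Hh_{\alpha+n}$ differ by the matrix $\bigl(\frac{1}{j+k+\alpha}-\frac{1}{j+k+\alpha+n}\bigr)$, which is again of the same form $\frac{n}{(j+k+\alpha)(j+k+\alpha+n)}$. It is cleaner, though, to split off the finitely many indices $j+k<n$, where $j+k+\alpha$ may be negative, as a finite-rank piece. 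What remains has entries with $j+k+\min(\alpha,\beta)>0$ and is handled by Step 2 applied to the shifted quantities.

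\emph{Step 2 (the positive case).} Assume $j+k+\alpha>0$ and $j+k+\beta>0$ on the relevant indices. I would write
\[
 \frac1{j+k+\alpha}-\frac1{j+k+\beta}=\int_0^1 t^{j+k}\bigl(t^{\alpha-1}-t^{\beta-1}\bigr)\,dt .
\]
Setting $g(t):=t^{\alpha-1}-t^{\beta-1}$, the difference operator equals $\int_0^1 g(t)\, v_t v_t^*\,dt$ with $v_t:=(t^j)_{j\ge0}$ and $\norm{v_t}^2=(1-t^2)^{-1}$. Since trace-class norms are subadditive under Bochner integration, we get
\[
 \norm{\Hh_\alpha-\Hh_\beta}_{\cS}\le\int_0^1\frac{\abs{g(t)}}{1-t^2}\,dt .
\]

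\emph{Step 3 (finiteness of the integral).} Near $t=1$ we have $\abs{g(t)}\le C(1-t)$, because $g(1)=0$ and $g$ is smooth there, so the integrand stays bounded. Near $t=0$ the integrand is integrable as soon as $\alpha,\beta>0$. This is exactly why Step 1 is needed: when $\alpha$ or $\beta$ is negative, the rows and columns with small index must first be removed. The resulting finite-rank piece, together with the remaining part (which is a shifted copy, $S^{*n}$ applied on both sides, of an operator covered by Step 2), gives trace class, and hence \eqref{eq:trace-class-difference}.

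The main obstacle I expect is organizing the reduction to positive parameters cleanly. Deleting the first $n$ rows and columns leaves entries with $j,k\ge n$, and these form $S^{*n}(\Hh_{\alpha+2n}-\Hh_{\beta+2n})S^{n}$ up to reindexing. The strip with $\min(j,k)<n$ has finite rank, since it consists of finitely many rows and columns. Each such row is a sequence of size $O(k^{-2})$ and therefore lies in $\ell^2$, so the strip is a finite-rank bounded operator and hence trace class.
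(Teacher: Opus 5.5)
Your proof is correct and, once the exploratory detours are removed, it is the paper's proof. The final organization matches the paper exactly: split off the finite-rank strip $\min(j,k)<n$, identify the remaining block with a shifted copy of $\Hh_{\alpha+2n}-\Hh_{\beta+2n}$ where $\alpha+2n,\beta+2n>0$, and use $\frac1{j+k+a}-\frac1{j+k+b}=\int_0^1 t^{j+k}(t^{a-1}-t^{b-1})\,dt$. Your Bochner bound $\int_0^1\abs{g(t)}(1-t^2)^{-1}\,dt$ is, for $a<b$, precisely the diagonal sum $\sum_n\bigl(\frac1{2n+a}-\frac1{2n+b}\bigr)$ that the paper uses together with positivity.

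When writing it up, drop three things:
\begin{itemize}
\item Step 1's claim $\Hh_{\alpha+1}=S^*\Hh_\alpha S$ is wrong: deleting one row and column shifts the parameter by $2$, so it is $\Hh_{\alpha+2}$.
\item Cutting only the entries with $j+k<n$ does not leave something Step 2 can handle; the cut must be the full strip $\min(j,k)<n$, as in your final paragraph.
\item The lower-right block is $S^n(\Hh_{\alpha+2n}-\Hh_{\beta+2n})S^{*n}$, not $S^{*n}(\cdots)S^n$.
\end{itemize}
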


\begin{proof}
Choose $r\in\Nzero$ so large that $\alpha+2r>0$ and $\beta+2r>0$. We
first show that treating $\Hh_{\alpha+2r}-\Hh_{\beta+2r}$ is enough
since the difference has finite rank.
Let $P_r$ be the projection onto the first $r$ coordinates and
$Q_r=\id-P_r$.
Expanding via $\id = P_r + Q_r$ gives
\[
 (\Hh_\alpha-\Hh_\beta) - Q_r(\Hh_\alpha-\Hh_\beta)Q_r=P_r(\Hh_\alpha-\Hh_\beta)+Q_r(\Hh_\alpha-\Hh_\beta)P_r,
\]
which has rank at most $2r$.  Moreover, if
$U_r:\ell^2(\Nzero)\to Q_r\ell^2(\Nzero)$ is given by
$U_re_j=e_{j+r}$, then
\[
 U_r^*Q_r(\Hh_\alpha-\Hh_\beta)Q_rU_r
 =
 \Hh_{\alpha+2r}-\Hh_{\beta+2r}.
\]
The difference between $U_r(\Hh_{\alpha+2r}-\Hh_{\beta+2r})U_r^*$ and
$\Hh_\alpha-\Hh_\beta$ is therefore of finite rank.

Assume, without loss of generality, that
$a:=\alpha+2r<b:=\beta+2r$.  The entries of $\Hh_a-\Hh_b$ satisfy
\[
 \frac1{j+k+a}-\frac1{j+k+b}
 =\int_0^1 t^{j+k}\bigl(t^{a-1}-t^{b-1}\bigr)\,dt.
\]
Hence, for $\psi\in\ell^2(\Nzero)$ with finite support we obtain
\begin{align}
    \<\psi, (\Hh_a - \Hh_b)\psi \>
    &= \int_0^1\bigl(t^{a-1}-t^{b-1}\bigr) \sum_{j,k} \overline{\psi_j} t^j t^k \psi_k\,dt \\
    &= \int_0^1\bigl(t^{a-1}-t^{b-1}\bigr) \abs{\< \psi, t_{\text{vec}} \>}^2\,dt \geq 0,
\end{align}
where $(t_{\text{vec}})_j = t^j$. Since
$t_{\text{vec}}\in\ell^2(\Nzero)$ for $t\in [0,1)$, this extends to
all $\psi\in\ell^2(\Nzero)$ and so the difference is positive. Its
diagonal sum is also finite:
\[
\Tr(\Hh_a - \Hh_b) =
 \sum_{n=0}^\infty
 \left(\frac1{2n+a}-\frac1{2n+b}\right)<\infty.
\]
A positive bounded operator with finite diagonal sum is trace class.
\end{proof}

\begin{lemma}[Determinant bound for trace-class perturbations]\label{lem:stable-traceclass}
Let $R,S$ be bounded self-adjoint operators on a Hilbert space with $S-R\in\cS$. Suppose that
$R\ge c\id$ and $S\ge c\id$ for some $c>0$.  Then, for every $N\in\mathbb N$,
\begin{equation}\label{eq:stable-determinant}
 \abs*{\log\det(P_NSP_N)-\log\det(P_NRP_N)}
 \le c^{-1}\norm{S-R}_1.
\end{equation}
The determinants are taken on the finite-dimensional space $P_N\ell^2(\Nzero)$.
\end{lemma}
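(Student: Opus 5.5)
Write $R_N:=P_NRP_N$ and $S_N:=P_NSP_N$, viewed as operators on $P_N\ell^2(\Nzero)$, and set $\Delta:=S-R\in\cS$. Then $S_N-R_N=P_N\Delta P_N$, and $\norm{P_N\Delta P_N}_1\le\norm{\Delta}_1$. Since $R\ge c\id$ and $S\ge c\id$, the compressions satisfy $R_N\ge c$ and $S_N\ge c$ on $P_N\ell^2(\Nzero)$, because $\<\psi,R_N\psi\>=\<\psi,R\psi\>$ for $\psi\in\Ran P_N$. In particular both compressions are positive definite, so both logarithms are defined and real.

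The plan is to interpolate linearly and differentiate. Put $X(t):=R_N+t\,P_N\Delta P_N=(1-t)R_N+tS_N$ for $t\in[0,1]$. As a convex combination, $X(t)\ge c$, hence $X(t)$ is invertible with $\norm{X(t)^{-1}}\le c^{-1}$. The map $t\mapsto\log\det X(t)$ is smooth, and Jacobi's formula gives
\[
 \frac{d}{dt}\log\det X(t)=\Tr\bigl(X(t)^{-1}P_N\Delta P_N\bigr).
\]
Hölder's inequality for Schatten norms then yields $\abs{\Tr(X(t)^{-1}P_N\Delta P_N)}\le\norm{X(t)^{-1}}\,\norm{P_N\Delta P_N}_1\le c^{-1}\norm{\Delta}_1$. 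Integrating over $t\in[0,1]$ gives
\[
 \abs*{\log\det S_N-\log\det R_N}\le c^{-1}\norm{S-R}_1,
\]
which is the claimed bound.

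The only point that needs care is the uniform lower bound along the whole path. This is why I interpolate between the compressions rather than factoring as $\det(\id+R_N^{-1}(S_N-R_N))$ and using $\abs{\log\det(\id+A)}\le\norm{A}_1$. The latter inequality fails for non-self-adjoint $A$ with eigenvalues near $-1$. Convexity of the set $\{X\ge c\}$ removes this issue, so no step is expected to be a real obstacle.
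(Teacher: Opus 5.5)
Your proof is correct and is essentially the paper's argument: interpolate $X(t)=(1-t)P_NRP_N+tP_NSP_N\ge c$, differentiate $\log\det X(t)$ via Jacobi's formula, bound the trace by $c^{-1}\norm{P_N(S-R)P_N}_1\le c^{-1}\norm{S-R}_1$, and integrate over $t\in[0,1]$. Your closing aside is not needed and is slightly off: $\abs{\log\det(\id+A)}\le\norm{A}_1$ already fails for self-adjoint $A$ with an eigenvalue near $-1$, so non-self-adjointness is not the real issue, but nothing in your argument depends on this remark.
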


\begin{proof}
Set $R_N:=P_NRP_N$ and $K_N:=P_N(S-R)P_N$.  For $0\le t\le1$,
\[
 R_N+tK_N=(1-t)P_NRP_N+tP_NSP_N\ge cP_N.
\]
Therefore
\[
 \frac{d}{dt}\log\det(R_N+tK_N)
 =\Tr\bigl((R_N+tK_N)^{-1}K_N\bigr),
\]
via Jacobi's determinant formula
and the absolute value of the right-hand side is bounded by
$c^{-1}\norm{K_N}_1\le c^{-1}\norm{S-R}_1$.  Integration over $t\in[0,1]$
proves the claim.
\end{proof}

Now, with $s$ as in \eqref{eq:nonthreshold}, we define the auxiliary operator
\begin{equation}\label{eq:Rpm}
 R^\pm:=\id\pm\frac{s}{\pi}\Hh_1.
\end{equation}

Recall from Section~\ref{sec:bulk-edge} that
$\widetilde T^{\,\pm}\ge1-\abs{s}$. The same is true for $R^\pm$, because
$\pi^{-1}\Hh_1$ has spectrum $[0,1]$ \cite{Magnus}.  Moreover,
\begin{equation}\label{eq:regularized-difference}
 \widetilde T^{\,\pm}-R^\pm
 =\pm\frac{s}{\pi}(\Hh_{1+\delta}-\Hh_1)+\Pi^\pm
 \in\cS
\end{equation}
by Lemma~\ref{lem:trace-class-shift} and $\dim K^\pm < \infty$.
Together with the positive lower bound for $\widetilde T^{\,\pm}$ and
$R^\pm$ this supplies the assumptions of
Lemma~\ref{lem:stable-traceclass} which then implies
\begin{equation}\label{eq:regularized-determinant}
  \log\det(P_N\widetilde T^{\,\pm}P_N) = \log \det (P_N R^\pm P_N) + O(1)
\end{equation}
as $N\to\infty$.

The asymptotics of the right-hand side of
\eqref{eq:regularized-determinant} are the case $\alpha=1$, $\beta=\mp
s$ of \cite[Thm.~1.1]{Otte}, whose exponent equals
$\frac12\gamma(\beta)\log N+O(1)$ with
$\gamma(\beta)=-\pi^{-2}\bigl(\arcsin^2\beta+\pi\arcsin\beta\bigr)$
for $\beta\in(-1,1)$. Since $\frac1\pi\arcsin(\mp s)=\pm\delta_{\text{reg}}$
by \eqref{eq:delta-reg-s}, this reads as follows.

\begin{proposition}[{\cite[Thm.~1.1]{Otte}}]\label{prop:otte}
Let $\delta\in\R\setminus(\mathcal P\cup\mathcal C)$ and let $R^\pm$ be as in
\eqref{eq:Rpm}. Then, as $N\to\infty$,
\begin{equation}\label{eq:regularized-asymptotic}
 \log\det(P_NR^\pm P_N)
 =-\frac{\delta_{\text{reg}}^2\pm\delta_{\text{reg}}}{2}\log N+O_\delta(1).
\end{equation}
\end{proposition}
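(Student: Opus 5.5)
This proposition is a restatement of \cite[Thm.~1.1]{Otte}, so my plan is to check that our $R^\pm$ matches the operator in that theorem and that the exponent transforms correctly; the asymptotic analysis itself is imported. Write $R^\pm=\id\pm\frac{s}{\pi}\Hh_1=\id-\beta\,\pi^{-1}\Hh_1$ with $\beta:=\mp s$. This is exactly the operator treated in \cite{Otte} for the Hilbert matrix parameter $\alpha=1$ and coupling $\beta$. The noncriticality assumption $\delta\notin\mathcal C$ gives $\abs{s}<1$ by \eqref{eq:nonthreshold}, so $\beta\in(-1,1)$, as \cite[Thm.~1.1]{Otte} requires. That theorem then yields
\[
 \log\det(P_NR^\pm P_N)=\tfrac12\gamma(\beta)\log N+O_\beta(1),
 \qquad
 \gamma(\beta)=-\pi^{-2}\bigl(\arcsin^2\beta+\pi\arcsin\beta\bigr).
\]

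The second step is purely algebraic. By \eqref{eq:delta-reg-s}, $\delta_{\text{reg}}=-\frac1\pi\arcsin(s)$. Since $\arcsin$ is odd, $\frac1\pi\arcsin(\mp s)=\pm\delta_{\text{reg}}$. Substituting gives
\[
 \gamma(\mp s)=-\bigl(\delta_{\text{reg}}^2\pm\delta_{\text{reg}}\bigr),
\]
and hence $\tfrac12\gamma(\beta)=-\frac{\delta_{\text{reg}}^2\pm\delta_{\text{reg}}}{2}$, which is \eqref{eq:regularized-asymptotic}. The constant in $O_\beta(1)$ depends only on $\beta=\mp\sin((1+\delta)\pi)$, so it is $O_\delta(1)$.

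The only real point of care, and the step I expect to be the main obstacle, is matching conventions with \cite{Otte}. Three things must agree there: the normalization of the Hilbert matrix (entries $1/(j+k+1)$, so that $\pi^{-1}\Hh_1$ has spectrum $[0,1]$ \cite{Magnus}), the sign convention for the coupling $\beta$, and the branch of $\arcsin$. I would cross-check the sign by a sanity test. For $\beta\to0$ the exponent should vanish, which it does. For $\delta\in(-\frac12,\frac12)$ we have $\delta_{\text{reg}}=\delta$, and the result should agree with the $\delta\ge-\frac12$ asymptotics stated in Theorem~\ref{thm:main}, where $\widetilde T^{\,\pm}=T^\pm$ because there are no edge eigenvalues. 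Combined with \eqref{eq:regularized-determinant}, consistency forces the stated sign, which confirms the identification $\beta=\mp s$.
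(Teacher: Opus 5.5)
Your proposal is correct and is essentially the paper's own argument. You identify $R^\pm=\id-\beta\,\pi^{-1}\Hh_1$ as the case $\alpha=1$, $\beta=\mp s$ of \cite[Thm.~1.1]{Otte}, which is admissible since $\delta\notin\mathcal C$ gives $\abs{s}<1$. Then \eqref{eq:delta-reg-s} and the oddness of $\arcsin$ give $\frac1\pi\arcsin(\mp s)=\pm\delta_{\text{reg}}$, hence $\tfrac12\gamma(\mp s)=-\tfrac12(\delta_{\text{reg}}^2\pm\delta_{\text{reg}})$.

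Your sign check against Theorem~\ref{thm:main} is fine as a heuristic. It is not an independent confirmation within the paper, though, because the $\delta\ge-\frac12$ case of that theorem is itself obtained via this proposition, ultimately from \cite{Otte}.
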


\begin{proof}[Proof of Theorem \ref{bulk-asymp}]
Combining \eqref{eq:regularized-asymptotic} with
\eqref{eq:regularized-determinant} proves \eqref{eq:bulk-asymptotic}.
\end{proof}

\section{The edge determinant}

In this section we determine the asymptotics of the edge determinant.
\begin{theorem}[Edge asymptotics]\label{edge-asymp}
Let $\delta\in\R\setminus(\mathcal P\cup\mathcal C)$, and let $M_N^\pm$ be
the matrix \eqref{eq:def-MN} of Lemma~\ref{lem:factorization} for
$T=T^\pm$. Then, as $N\to\infty$,
\begin{equation}\label{eq:edge-contribution}
 \log\det M_N^\pm
 =
 \sum_{j\in J^\pm_\delta}(2\delta+2j+1)
 \,
 \log N+O(1).
\end{equation}
For $\delta\geq-\frac12$ the sum is empty and $\det M_N^\pm=1$.
\end{theorem}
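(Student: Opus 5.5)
The plan is to reduce $M_N^\pm$ to a Gram matrix of the \emph{tails} $Q_Nu_j$ of the edge eigenvectors, and then evaluate that Gram determinant from the power-law asymptotics of the $u_j$.

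\emph{Step 1: exact formula for $M_N^\pm$.} Write $T=T^\pm$, $\widetilde T=\widetilde T^{\,\pm}$, $\Pi=\Pi^\pm$, $V$ an isometry onto $K=K^\pm$. Since $TV=0$ and $\Pi V=V$, we have $\widetilde TV=V$, hence $\widetilde T^{-1}V=V$ and $V^*\widetilde T^{-1}V=\id_m$. Therefore
\[
 M_N=V^*\bigl(\widetilde T^{-1}-P_N(P_N\widetilde TP_N)^{-1}P_N\bigr)V .
\]
Set $W:=\widetilde T^{-1}$ and $S_N:=Q_NWQ_N$, viewed on $Q_N\ell^2(\Nzero)$. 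The block-inverse (Schur complement) identity gives
\[
 W-P_N(P_N W^{-1}P_N)^{-1}P_N=WQ_N S_N^{-1}Q_NW .
\]
Combining this with $WV=V$ yields
\[
 M_N=(Q_NV)^*S_N^{-1}(Q_NV).
\]
The spectrum of $\widetilde T$ lies in $[1-\abs{s},2]$ (Section~\ref{sec:bulk-edge}). Hence $\tfrac12\le S_N\le(1-\abs{s})^{-1}$ uniformly in $N$, and therefore
\[
 (1-\abs{s})\,G_N\le M_N\le 2\,G_N,\qquad G_N:=V^*Q_NV .
\]
Taking determinants, $\log\det M_N=\log\det G_N+O(1)$, provided $G_N$ is invertible. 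This will follow from Step 2, and in any case $\det M_N>0$ by Lemma~\ref{lem:positivity} and Lemma~\ref{lem:factorization}.

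\emph{Step 2: switch to the explicit basis.} By \eqref{eq:kernel-basis}, $\{u_j:j\in J^\pm_\delta\}$ is a basis of $K$. Thus $V=UC$ for a fixed invertible $m\times m$ matrix $C$, where $U:\C^m\to K$ maps the standard basis to the $u_j$. Then $\det G_N=\abs{\det C}^2\det\bigl(\<Q_Nu_i,Q_Nu_k\>\bigr)_{i,k}$, so the change of basis only costs $O(1)$ in the logarithm.

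By Lemma~\ref{lem:eigenvector-tails}, $u_j(n)=c_jn^{\delta+j}(1+o(1))$ with $c_j\neq0$, and $2\delta+2j+1<0$ for $j\in J^\pm_\delta$. Rescale with $D_N:=\operatorname{diag}(N^{-(\delta+j+1/2)})$ and compare the sums with integrals, $n=Nx$. This gives
\[
 D_N\bigl(\<Q_Nu_i,Q_Nu_k\>\bigr)D_N\longrightarrow \bigl(\overline{c_i}c_k\textstyle\int_1^\infty x^{2\delta+i+k}\,dx\bigr)_{i,k}
 =\Bigl(\frac{-\overline{c_i}c_k}{2\delta+i+k+1}\Bigr)_{i,k}.
\]
The limit is $\operatorname{diag}(\overline{c_i})\,\mathcal K\,\operatorname{diag}(c_k)$ with the Cauchy matrix $\mathcal K=\bigl(1/(a_i+a_k)\bigr)$, where $a_j:=-(\delta+j+\tfrac12)>0$ are distinct. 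The matrix $\mathcal K$ is the Gram matrix of the functions $e^{-a_jt}$ in $L^2(0,\infty)$, which are linearly independent, so it is positive definite. Its determinant is therefore a nonzero constant. Consequently
\[
 \log\det G_N=\sum_{j\in J^\pm_\delta}(2\delta+2j+1)\log N+O(1),
\]
which is \eqref{eq:edge-contribution}. For $\delta\ge-\frac12$ we have $m=0$ and $M_N$ is the empty matrix with determinant $1$.

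\emph{Main obstacle.} The delicate point is the uniformity in Step 2. Convergence of the rescaled Gram matrix to a nonsingular limit requires the relative error $o(1)$ in the tail asymptotics to be uniform over the tail range $n\ge N$. It also requires the sum-to-integral comparison to be valid with the exponents $2\delta+i+k<-1$, so that the tail sums are dominated by integrable majorants. I would first establish from \eqref{eq:fj} (singularity analysis of $(1-z)^{-(1+\delta+j)}(1+z)^j$ at $z=1$) that
\[
 u_j(n)=c_jn^{\delta+j}\bigl(1+O(n^{-1})\bigr),
\]
which makes both steps routine via dominated convergence. A minor secondary check is the Schur-complement identity for the infinite-dimensional $W$. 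This is justified because $W$ is bounded and boundedly invertible and $P_NW^{-1}P_N$ is invertible on $\Ran P_N$.
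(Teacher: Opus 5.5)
Your proof is correct. Its architecture matches the paper's: a two-sided Loewner comparison $M_N^\pm\asymp V^*Q_NV$, a change to the explicit basis $(u_j)$, and a rescaled tail Gram matrix converging to a positive definite Cauchy matrix. Your Gram matrix of $e^{-a_jt}$ on $(0,\infty)$ is the paper's Gram matrix of $x^{\delta+j}$ on $(1,\infty)$ after the substitution $x=e^t$.

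The genuine difference is how you obtain the comparison. The paper, in Lemma~\ref{lem:edge-determinant}, writes $x^*M_Nx=\dist(Vx,\widetilde T^{1/2}P_N\ell^2)^2$ and bounds the infimum using $\norm{\widetilde T^{\pm1/2}}$. You instead derive from the block-inverse formula the exact identity
\[
 M_N=(Q_NV)^*\bigl(Q_N\widetilde T^{-1}Q_N\bigr)^{-1}(Q_NV),
\]
and read the bounds off $\operatorname{spec}(\widetilde T)\subseteq[1-\abs{s},2]$. The two derivations are dual. The orthogonal complement of $\widetilde T^{1/2}P_N\ell^2$ is $\widetilde T^{-1/2}Q_N\ell^2$. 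Projecting $Vx=\widetilde T^{-1/2}Vx$ onto it gives exactly your formula, with the same constants $\min\operatorname{spec}\widetilde T$ and $\norm{\widetilde T}$. Your route produces an exact representation of $M_N$, so finer information about $Q_N\widetilde T^{-1}Q_N$ could be fed in directly. The paper's geometric argument avoids justifying the Schur complement in infinite dimensions.

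Your positivity argument for $G_N$ is also valid, but spell it out. From $M_N\ge(1-\abs{s})G_N\ge0$ and $\det M_N>0$ you get $M_N>0$, hence $G_N\ge\frac12M_N>0$ for every $N$. This uses the same input as the paper's observation that no nonzero kernel vector has finite support, since Lemma~\ref{lem:positivity} rests on it.
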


We first simplify the edge determinant in the following way:

\begin{lemma}[Edge determinant]\label{lem:edge-determinant}
In the setting of Lemma~\ref{lem:factorization}, let
\begin{equation}\label{eq:loewner-constants}
 c=\norm{\widetilde T^{-1/2}}^{-2},
 \qquad
 C=\norm{\widetilde T^{1/2}}^2.
\end{equation}
Then
\begin{equation}\label{eq:loewner-tail}
 c\,V^*Q_NV\le M_N\le C\,V^*Q_NV
\end{equation}
and if $V^*Q_NV$ is positive definite
\begin{equation}\label{eq:edge-logdet}
 \log\det M_N
 =\log\det(V^*Q_NV)+O(1).
\end{equation}
\end{lemma}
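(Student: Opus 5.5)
The plan is to express $M_N$ as a Schur-complement-type quantity and compare it with $V^*Q_NV$ via operator inequalities. The first step is to find a clean formula for $M_N$. Write $\widetilde T_N:=P_N\widetilde TP_N$ on $P_N\ell^2(\Nzero)$. Since $\widetilde T V = TV+\Pi V = V$ (because $TV=0$ and $\Pi V=V$), we have $V=\widetilde T^{-1}V$, so $V^*\widetilde T^{-1}V=V^*V=\id_m$. Hence
\[
 M_N=V^*\widetilde T^{-1}V-V^*P_N\widetilde T_N^{-1}P_NV .
\]
The key structural fact I would use is the variational (Schur complement) characterization: for $\widetilde T>0$ and a vector $x$,
\[
 \<x,\widetilde T^{-1}x\>-\<P_Nx,\widetilde T_N^{-1}P_Nx\>
 =\<Q_Nx,\,S_N^{-1}Q_Nx\>,
\]
where $S_N^{-1}:=Q_N\widetilde T^{-1}Q_N$ compressed appropriately; more precisely, the left-hand side equals $\<Q_N x,(Q_N\widetilde T^{-1}Q_N)\ldots\>$. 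Concretely, I would use the block inversion formula: writing $\widetilde T^{-1}$ in the $P_N\oplus Q_N$ decomposition, $\widetilde T^{-1}-P_N\widetilde T_N^{-1}P_N = Y^* \Sigma^{-1} Y$ with $Y=Q_N-Q_N\widetilde TP_N\widetilde T_N^{-1}P_N$ and $\Sigma=Q_N\widetilde TQ_N-Q_N\widetilde TP_N\widetilde T_N^{-1}P_N\widetilde TQ_N$ the Schur complement of $\widetilde T_N$. That route seems messy, so I would prefer the equivalent, cleaner identity
\[
 M_N = V^*\widetilde T^{-1}V - V^*P_N\widetilde T_N^{-1}P_NV = V^*\widetilde T^{-1/2}\bigl(\id-\mathcal P_N\bigr)\widetilde T^{-1/2}V,
\]
where $\mathcal P_N$ is the orthogonal projection onto $\Ran(\widetilde T^{1/2}P_N)$. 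This holds because $\widetilde T^{1/2}P_N\widetilde T_N^{-1}P_N\widetilde T^{1/2}$ is exactly that projection, as one checks directly that it is self-adjoint, idempotent, and has the right range.

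Now use $\widetilde T^{-1/2}V=\widetilde T^{1/2}V$, which holds since $\widetilde T V=V$ and so $V$ lies in the eigenvalue-$1$ eigenspace of $\widetilde T$. Then
\[
 \<a,M_Na\>=\dist\bigl(\widetilde T^{1/2}Va,\ \widetilde T^{1/2}P_N\ell^2\bigr)^2=\inf_{\phi\in P_N\ell^2}\norm{\widetilde T^{1/2}(Va-\phi)}^2 .
\]
Since $c\norm{x}^2\le\norm{\widetilde T^{1/2}x}^2\le C\norm{x}^2$ by the definitions of $c$ and $C$, the infimum lies between $c\inf_\phi\norm{Va-\phi}^2$ and $C\inf_\phi\norm{Va-\phi}^2$, and $\inf_\phi\norm{Va-\phi}^2=\norm{Q_NVa}^2=\<a,V^*Q_NVa\>$. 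This gives \eqref{eq:loewner-tail}.

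For \eqref{eq:edge-logdet}, the Loewner bounds between positive definite $m\times m$ matrices give, by monotonicity of the determinant, $m\log c+\log\det(V^*Q_NV)\le\log\det M_N\le m\log C+\log\det(V^*Q_NV)$, so the error is bounded by $m\max(\abs{\log c},\abs{\log C})$, uniformly in $N$. The main obstacle is verifying the projection identity and that $\widetilde T_N$ is the right compression. Both are routine once $\widetilde T>0$ is known, which follows from the hypotheses of Lemma~\ref{lem:factorization}.
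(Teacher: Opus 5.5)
Your proof is correct and is essentially the paper's argument: the paper also identifies $\widetilde T^{1/2}P_N(P_N\widetilde TP_N)^{-1}P_N\widetilde T^{1/2}$ as the orthogonal projection onto $\Ran(\widetilde T^{1/2}P_N)$. It then uses $\widetilde T^{1/2}V=V$ to write $x^*M_Nx$ as a squared distance, compares it with $\norm{Q_NVx}^2$ via the bounds on $\widetilde T^{\pm1/2}$, and gets the determinant bound by conjugating with $(V^*Q_NV)^{-1/2}$, which is equivalent to your monotonicity step. Your rewriting via $V^*\widetilde T^{-1}V=\id_m$ is just $V^*V=\id_m$ in another guise, and the abandoned Schur-complement aside is not needed (its first displayed identity is misstated, since the difference also depends on $P_Nx$).
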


\begin{proof}
Let $B:=\widetilde T^{1/2}$.  Since by assumption $\widetilde T > 0$, $B$ is boundedly
invertible.  Moreover, $\widetilde T=\id$ on $K$, so
\begin{equation}\label{eq:B-on-kernel}
 BV=V.
\end{equation}
Let
\[
 C_N:=BP_N: P_N\ell^2(\Nzero)\rightarrow\ell^2(\Nzero).
\]
We write $P_N\ell^2 := P_N\ell^2(\Nzero)$ for brevity.
Then
\[
 C_N^*C_N=P_N\widetilde TP_N,
 \qquad
 C_N^*V=P_NBV=P_NV.
\]
Because $C_N$ is injective and has finite-dimensional range,
\begin{equation}\label{eq:EN-projection}
 E_N:=C_N(C_N^*C_N)^{-1}C_N^*
\end{equation}
is the orthogonal projection onto
$\Ran C_N=BP_N\ell^2$.  Hence, for every $x\in\C^m$,
\begin{align}
 x^*M_Nx
 &=\norm{Vx}^2-
   \bigl\langle C_N^*Vx,(C_N^*C_N)^{-1}C_N^*Vx\bigr\rangle\notag\\
 &=\langle Vx,(\id-E_N)Vx\rangle\notag\\
 &=\norm{(\id-E_N)Vx}^2\notag\\
 &=\dist\bigl(Vx,BP_N\ell^2\bigr)^2.
 \label{eq:MN-distance}
\end{align}

Using \eqref{eq:B-on-kernel},
\begin{equation*}
 \dist\bigl(Vx,BP_N\ell^2\bigr)
 =\inf_{y\in P_N\ell^2}\norm{Vx-By}
 =\inf_{y\in P_N\ell^2}\norm{B(Vx-y)}.
\end{equation*}
For every $z\in\ell^2$,
\[
 \norm{B^{-1}}^{-1}\norm{z}
 \le\norm{Bz}
 \le\norm{B}\norm{z}.
\]
Using $z = Vx - y$, taking the infimum over $y\in P_N\ell^2$, and
using that $P_N$ is an orthogonal projection gives
\begin{equation}\label{eq:distance-comparison}
 \norm{B^{-1}}^{-1}\norm{Q_NVx}
 \le\dist\bigl(Vx,BP_N\ell^2\bigr)
 \le\norm{B}\norm{Q_NVx}.
\end{equation}
Since
\[
 \norm{Q_NVx}^2=x^*V^*Q_NVx,
\]
combining \eqref{eq:MN-distance} and \eqref{eq:distance-comparison} proves
\eqref{eq:loewner-tail} with the constants in
\eqref{eq:loewner-constants}.

Finally, define $G_N:=V^*Q_NV$. Assuming that $G_N>0$, $G_N$ is invertible. Then conjugating
\eqref{eq:loewner-tail} by $G_N^{-1/2}$ gives
\[
 c\id_m\le G_N^{-1/2}M_NG_N^{-1/2}\le C\id_m.
\]
Therefore
\[
 c^m\le\frac{\det M_N}{\det G_N}\le C^m.
\]
Since $m$ is fixed, this is equivalent to \eqref{eq:edge-logdet}.
\end{proof}

Lemma~\ref{lem:edge-determinant} reduces the asymptotic
problem to the tails of the explicit vectors $u_j$ from
Proposition~\ref{prop:spectral-data}, where $u_j(n)$ is the $n$th
Taylor coefficient of $f_j$ in \eqref{eq:fj}.

\begin{lemma}[Tails of the edge eigenvectors]\label{lem:eigenvector-tails}
Assume $\delta\notin\mathbb Z$ and let $j\in\Nzero$ satisfy
$\delta+j<-\frac12$.  Then
\begin{equation}\label{eq:coeff-asymptotic}
 u_j(n)
 =c_j n^{\delta+j}\bigl(1+O(n^{-1})\bigr),
 \qquad
 c_j:=\frac{2^j}{\Gamma(\delta+j+1)}\ne0.
\end{equation}
This implies in particular $u_j \in \ell^2(\Nzero)$.
Moreover, for any $J\subseteq\{j\in\Nzero:\delta+j<-\frac12\}$ we have
\begin{equation}\label{eq:tail-gram-asymptotic}
 \log\det\left(\sum_{n=N}^\infty u_i(n)u_j(n)\right)_{i,j\in J}
 =
 \sum_{j\in J}(2\delta+2j+1) \log N + O(1)
\end{equation}
as $N\to\infty$ with a constant depending on $J$.
\end{lemma}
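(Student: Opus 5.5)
For the coefficient asymptotics \eqref{eq:coeff-asymptotic} I would use singularity analysis, or just an elementary expansion. Write
\[
(1+z)^j=\sum_{k=0}^{j}\binom jk 2^{j-k}(-1)^k(1-z)^k .
\]
Then
\[
f_j(z)=\sum_{k=0}^j\binom jk2^{j-k}(-1)^k(1-z)^{-(1+\delta+j-k)} .
\]
Each term is a pure power of $(1-z)$ whose Taylor coefficients are
\[
\frac{\Gamma(n+a)}{\Gamma(a)\Gamma(n+1)},\qquad a=1+\delta+j-k .
\]
By the standard ratio asymptotic for Gamma functions this equals
\[
\frac{n^{a-1}}{\Gamma(a)}\bigl(1+O(n^{-1})\bigr)
\]
when $a\notin\{0,-1,\dots\}$. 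When $a\in\{0,-1,\dots\}$ the term is a polynomial and contributes nothing for large $n$; this cannot happen for $k=0$ since $\delta\notin\mathbb Z$. The $k=0$ term dominates, with exponent $\delta+j$, and every other term is smaller by at least a factor $n^{-1}$. This gives \eqref{eq:coeff-asymptotic} with $c_j=2^j/\Gamma(\delta+j+1)$, which is nonzero because $\delta+j+1\notin-\Nzero$. Square summability follows from $2(\delta+j)<-1$.

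For \eqref{eq:tail-gram-asymptotic}, set $a_j:=\delta+j$ and let $G_N$ be the tail Gram matrix. I would substitute the expansion at a finer level. By the argument above,
\[
u_j(n)=\sum_{k=0}^{j}c_{j,k}n^{a_j-k}+O(n^{a_j-j-1})
\]
with $c_{j,0}=c_j$; in fact a full asymptotic series in powers $n^{a_j-k}$ exists. The key step is a triangular change of basis. The exponents $a_j$ for $j\in J$ differ by integers, so lower-order terms of $u_i$ with $i>j$ carry the same powers $n^{a_j}$ as leading terms of $u_j$. I would replace $u_i$, in increasing order of $i$, by
\[
w_i:=u_i-\sum_{j<i,\,j\in J}\beta_{ij}u_j ,
\]
a unimodular lower-triangular transformation that leaves $\det G_N$ unchanged. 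The coefficients $\beta_{ij}$ are chosen to kill the $n^{a_j}$ components. Some components cannot be killed when $J$ has gaps. I would therefore aim only for the weaker statement $w_i(n)=c_in^{a_i}(1+o(1))$ with a controlled expansion. Then
\[
\sum_{n\ge N}w_i(n)w_k(n)=N^{a_i+a_k+1}\bigl(g_{ik}+o(1)\bigr),
\qquad g_{ik}:=\frac{-c_ic_k}{a_i+a_k+1}.
\]
Here $g$ is a Cauchy matrix $(1/(x_i+x_k))$ with $x_i=-(a_i+\tfrac12)>0$, scaled by $c_ic_k$. Such a matrix is positive definite with nonzero determinant. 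Factoring out $D=\operatorname{diag}(N^{a_i+1/2})$ gives
\[
\det G_N=\det(D)^2\bigl(\det g+o(1)\bigr).
\]
This yields $\sum_i(2a_i+1)\log N+O(1)$.

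The main obstacle is making the cross-term control rigorous. Without the change of basis, the entry $(G_N)_{ik}$ for $i>k$ contains contributions of size $N^{a_i+a_k+1}$ from the leading terms. It also contains contributions from subleading parts of $u_i$ that pair like $n^{a_k}$ with $u_k$, of size $N^{2a_k+1}$, which is larger. The rescaled matrix $D^{-1}G_ND^{-1}$ therefore need not converge to $g$. What I would try first avoids exact cancellation. The exact finite sums $\sum_{n\ge N}n^{p}n^{q}$ expand as $N^{p+q+1}/(-(p+q+1))\,(1+O(1/N))$. So $G_N$, up to $O(1)$-relative errors, is the Gram matrix of functions
\[
u_i\approx\sum_{k}c_{i,k}x^{a_i-k}
\]
integrated over $[N,\infty)$. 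Its determinant is scale-covariant: substituting $x=Ny$ shows
\[
\det G_N=N^{\sum(2a_i+1)}\det\Bigl(\int_1^\infty\tilde u_i\tilde u_k\Bigr)\cdot\bigl(1+o(1)\bigr).
\]
Here $\tilde u_i(y)=\sum_k c_{i,k}N^{-k}y^{a_i-k}$. As $N\to\infty$ the functions $\tilde u_i$ converge to $c_iy^{a_i}$, which are linearly independent, so the limiting Gram determinant is positive. This scaling argument sidesteps the triangular bookkeeping. The remaining work is to bound the remainder terms $O(n^{a_i-j-1})$ uniformly, and to check that replacing sums by integrals costs only a relative $O(1/N)$.
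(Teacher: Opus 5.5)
Your proof of \eqref{eq:coeff-asymptotic} is fine. Expanding $(1+z)^j=(2-(1-z))^j$ in powers of $1-z$ is a harmless variant of the paper's Cauchy product with shifted Gamma ratios. Your final scaling argument for \eqref{eq:tail-gram-asymptotic} is also correct, and once unwrapped it is exactly the paper's proof.

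The ``main obstacle'' that motivated the detour is not real, however: you have an inequality backwards. For $i>k$ in $J$ we have $a_i>a_k$. So the contribution of order $N^{2a_k+1}$, coming from the $n^{a_k}$-component of $u_i$ paired with $u_k$, is \emph{smaller} than $N^{a_i+a_k+1}$, by the factor $N^{-(i-k)}$. It is simply part of the relative $O(n^{-1})$ error. Multiplying two instances of \eqref{eq:coeff-asymptotic} gives directly $u_i(n)u_k(n)=c_ic_kn^{a_i+a_k}(1+O(n^{-1}))$. Since $a_i+a_k<-1$, this gives $(G_N)_{ik}=g_{ik}N^{a_i+a_k+1}(1+O(N^{-1}))$. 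Hence $D^{-1}G_ND^{-1}\to g$ entrywise, and $\det(D^{-1}G_ND^{-1})\to\det g>0$ by continuity of the determinant of a fixed-size matrix. No cancellation issue arises, because the limit is nondegenerate. This is precisely what the paper does; it identifies $g$ as the Gram matrix of $c_jx^{\delta+j}$ in $L^2(1,\infty)$, which is equivalent to your Cauchy-matrix observation.

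Your scaling argument is the same statement in disguise. The quantity $\int_1^\infty\tilde u_i\tilde u_k\,dy$ equals $N^{-(a_i+a_k+1)}(G_N)_{ik}=(D^{-1}G_ND^{-1})_{ik}$ up to the sum-versus-integral error. So it proves exactly the convergence $D^{-1}G_ND^{-1}\to g$ that you asserted might fail. Consequently none of the following is needed:
\begin{itemize}
\item the triangular change of basis;
\item the full expansions in powers $n^{a_j-k}$;
\item any uniform control beyond the single $O(n^{-1})$ remainder in \eqref{eq:coeff-asymptotic}.
\end{itemize}
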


\begin{proof}
Recall that $u_j(n)$ are the Taylor coefficients of $f_j(z) = (1+z)^j(1-z)^{-(\delta+j+1)}$, see \eqref{eq:fj}. Since
\[
 (1+z)^j=\sum_{\ell=0}^j\binom j\ell z^\ell
\]
and
\[
(1-z)^{-(\delta+j+1)} = \sum_{l=0}^\infty \frac {\Gamma(l+\delta+j+1) }{\Gamma(\delta + j+1)\Gamma(l+1)} z^l,
\]
the Cauchy product formula gives for indices $n\ge j$,
\[
 u_j(n)
 =\sum_{\ell=0}^j\binom j\ell
 \frac{\Gamma(n-\ell+\delta+j+1)}
 {\Gamma(\delta+j+1)\Gamma(n-\ell+1)}.
\]
The gamma-ratio asymptotic expansion
\[
 \frac{\Gamma(n+a)}{\Gamma(n+b)}
 =n^{a-b}\bigl(1+O(n^{-1})\bigr)
\]
as $n\to\infty$
then yields \eqref{eq:coeff-asymptotic}, because
$\sum_{\ell=0}^j\binom j\ell=2^j$.

For $i,j\in J$, the hypotheses imply $2\delta+i+j<-1$.  Therefore, for $i,j\in J$
\begin{equation}\label{eq:tail-pairing}
 \sum_{n=N}^\infty u_i(n)u_j(n)
 =\frac{c_ic_j}{-(2\delta+i+j+1)}
 N^{2\delta+i+j+1}\bigl(1+o(1)\bigr)
\end{equation}
as $N\to\infty$.
Factoring out $N^{\delta+i+1/2}$ from row $i$ and
$N^{\delta+j+1/2}$ from column $j$, we obtain
\begin{equation}
    \det\left(\sum_{n=N}^\infty u_i(n)u_j(n)\right)_{i,j\in J}
    = \det A \times
    N^{\sum_{j\in J}(2\delta+2j+1)}\bigl(1+o(1)\bigr)
\end{equation}
as $N\to\infty$ with
\begin{equation}\label{eq:Cauchy-limit}
 A:= \left(\frac{c_ic_j}{-(2\delta+i+j+1)}\right)_{i,j\in J}.
\end{equation}
The matrix $A$ is positive definite: it is the Gram matrix of the linearly
independent functions $c_jx^{\delta+j}$ in $L^2(1,\infty)$.  Its determinant
is therefore positive, proving \eqref{eq:tail-gram-asymptotic}.
\end{proof}

\begin{proof}[Proof of Theorem \ref{edge-asymp}]
We may assume $J^\pm_\delta\ne\emptyset$, otherwise $M_N^\pm$ is empty and \eqref{eq:edge-contribution} is trivial. Let $U:\C^m\to\Ker T^\pm$, $m:=|J^\pm_\delta|$, send the standard basis to the basis $(u_j)_{j\in J^\pm_\delta}$ of $\Ker T^\pm$ from \eqref{eq:kernel-basis}, let $G:=U^*U=(\langle u_i,u_j\rangle)_{i,j}$ be its positive definite Gram matrix, and set $V:=UG^{-1/2}$. Then $V^*V=\id_m$ and $\Ran V=\Ker T^\pm$, so $V$ is an isometry admissible in \eqref{eq:def-MN}. By the argument in the proof of Lemma \ref{lem:positivity} no nonzero vector of $\Ker T^\pm$ has finite support, so $V^*Q_NV>0$ for all $N$, and Lemma \ref{lem:edge-determinant} gives
\begin{equation}
    \log\det M_N^\pm =\log\det(V^*Q_NV)+O(1).
\end{equation}
Since $V^*Q_NV=G^{-1/2}(U^*Q_NU)G^{-1/2}$ and $(U^*Q_NU)_{i,j}=\langle u_i,Q_Nu_j\rangle=\sum_{n\ge N}u_i(n)u_j(n)$ (the $u_j(n)$ being real),
\begin{equation}
    \log\det(V^*Q_NV)
    =\log\det\Bigl(\sum_{n=N}^\infty u_i(n)u_j(n)\Bigr)_{i,j\in J^\pm_\delta}-\log\det G ,
\end{equation}
with $\log\det G$ a constant. Since $J^\pm_\delta\subseteq\{j\in\Nzero:\delta+j<-\tfrac12\}$, Lemma \ref{lem:eigenvector-tails} yields
\begin{equation}
 \log\det M_N^\pm
 =
 \sum_{j\in J^\pm_\delta}(2\delta+2j+1)
 \,\log N+O(1).
\end{equation}
\end{proof}

\section{Proof of Theorem \ref{thm:main}}

\begin{proof}
Combining Lemma \ref{lem:factorization}, Theorem \ref{bulk-asymp} and Theorem \ref{edge-asymp}
 yields
\begin{equation}\label{eq:master-formula}
 \log D_N^\pm(\delta)
 =\left[
 -\frac{\delta_{\text{reg}}^2\pm \delta_{\text{reg}}}{2}
 +\sum_{j\in J^\pm_\delta}(2\delta+2j+1)
 \right]\log N+O_\delta(1).
\end{equation}
Formula~\eqref{eq:master-formula} is the unsimplified strong asymptotic.
We note that $J^\pm_\delta \neq \emptyset$ only for $\delta< -\frac 12$.

Now, Lemma \ref{lem:unfolding} below concludes the proof.
\end{proof}

\begin{lemma}\label{lem:unfolding}
Let $\delta\in\R\setminus(\mathcal P\cup\mathcal C)$.
Then
\begin{equation}\label{eq:unfolding-identity}
 -\frac{\delta_{\text{reg}}^2\pm \delta_{\text{reg}}}{2}
 +\sum_{j\in J_\delta^\pm}(2\delta + 2j + 1)
 =-\frac{\theta_\delta^2\pm\theta_\delta}{2}.
\end{equation}
\end{lemma}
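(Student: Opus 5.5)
The plan is to prove \eqref{eq:unfolding-identity} by induction on the number of unit steps needed to bring $\delta$ into $[-\frac12,\frac12)$. The induction rests on two shift relations under $\delta\mapsto\delta+1$: one for the left-hand side and one for the right-hand side.

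\emph{Base case.} For $\delta\ge-\frac12$ the sets $J^\pm_\delta$ are empty, and $\theta_\delta=\delta_{\text{reg}}$ by definition \eqref{eq:effective-phase}. So both sides of \eqref{eq:unfolding-identity} coincide and nothing is to be proved.

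\emph{Shift relations.} Write
\[
 L^\pm(\delta):=-\tfrac{\delta_{\text{reg}}^2\pm\delta_{\text{reg}}}{2}+\sum_{j\in J^\pm_\delta}(2\delta+2j+1),
 \qquad F^\pm(x):=-\tfrac{x^2\pm x}{2}.
\]
For the right-hand side, a direct expansion gives
\[
 F^+(\delta)=F^-(\delta+1),\qquad F^-(\delta)=F^+(\delta+1)+(2\delta+1).
\]
For the left-hand side I use three facts.
\begin{enumerate}
\item Since $\sin((\delta+1)\pi)=-\sin(\delta\pi)$ and $\arcsin$ is odd, we have $(\delta+1)_{\text{reg}}=-\delta_{\text{reg}}$. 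Hence the bulk term of $L^\pm(\delta)$ equals the bulk term with the opposite sign evaluated at $\delta+1$.
\item The condition $j<-\delta-\frac12$ is equivalent to $j-1<-(\delta+1)-\frac12$, and the shift $j\mapsto j-1$ flips parity. Therefore $j\mapsto j-1$ is a bijection from $J^+_\delta$ onto $J^-_{\delta+1}$ and from $J^-_\delta\setminus\{0\}$ onto $J^+_{\delta+1}$.
\item The summands are invariant under this shift: $2\delta+2j+1=2(\delta+1)+2(j-1)+1$.
\end{enumerate}
Now let $\delta<-\frac12$. Then $0\in J^-_\delta$, and this index contributes the extra summand $2\delta+1$. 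The three facts together give
\[
 L^+(\delta)=L^-(\delta+1),\qquad L^-(\delta)=L^+(\delta+1)+(2\delta+1).
\]

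\emph{Induction step.} Let $\delta<-\frac12$ be noncritical. Then $\delta+1$ is again noncritical and not a pole, and $\delta+1<\frac12$. By the induction hypothesis (or the base case), $L^\pm(\delta+1)=F^\pm(\theta_{\delta+1})$. Moreover $\theta_{\delta+1}=\delta+1$: this holds by definition if $\delta+1<-\frac12$, and because $(\delta+1)_{\text{reg}}=\delta+1$ for $\delta+1\in[-\frac12,\frac12)$. Combining the two pairs of shift relations yields $L^\pm(\delta)=F^\pm(\delta)=F^\pm(\theta_\delta)$, which is \eqref{eq:unfolding-identity}. The induction terminates after finitely many steps, since each step raises $\delta$ by one until it lands in $[-\frac12,\frac12)$.

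\emph{Main obstacle.} The only delicate point is the index bookkeeping in fact 2 above: checking that the parity swap exactly matches the sign swap in $\delta_{\text{reg}}$, and that $j=0$ is the unique index lost under the shift. This relies on $\delta\notin\mathcal C$, which makes the inequality $j<-\delta-\frac12$ unambiguous at its endpoint.
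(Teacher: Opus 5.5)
Your proof is correct and is essentially the paper's argument: the paper fixes $\delta$ and telescopes along $\rho_j=(-1)^j(\delta+j)$ from $\rho_0=\delta$ to $\rho_m=\delta_{\text{reg}}$, and its per-step identity is exactly your pair of shift relations $F^+(x)=F^-(x+1)$, $F^-(x)=F^+(x+1)+2x+1$ (together with $F^\pm(-x)=F^\mp(x)$), which you apply recursively via $\delta\mapsto\delta+1$ with the sign swapping, whereas the paper unrolls them into a single sum. One small correction to your closing remark: the hypothesis $\delta\notin\mathcal C$ is not actually needed anywhere, since $j<-\delta-\frac12\iff j-1<-(\delta+1)-\frac12$ holds for every real $\delta$, and the identity in fact also holds at the half-integers.
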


\begin{proof}
Since for $\delta \geq -\frac 12$ the index set $J^\pm_\delta$ is empty, there is nothing to prove as $\theta_\delta = \delta_{\text{reg}}$.

Let $\delta < -\frac12$, so that $\theta_\delta=\delta$, and define
\begin{equation}\label{eq:m-def}
 m := \abs{J_\delta^+} + \abs{J_\delta^-} = \abs{\{j\in\Nzero:j<-\delta-\tfrac12\}}
\end{equation}
where $\abs{\,\cdot\,}$ denotes the cardinality of a set.
Then $-\delta\in(m-\frac12,m+\frac12)$, and on this interval
\begin{equation}\label{eq:lift}
 \delta_{\text{reg}}=(-1)^m(\delta+m),
 \qquad\text{equivalently}\qquad
 \delta=(-1)^m\delta_{\text{reg}}-m.
\end{equation}
We index the $m$ edge eigenvectors by $j=0,\ldots,m-1$, with
alternating eigenvalues and put
\[
 \rho_j:=(-1)^j(\delta+j) \quad \text{for} \quad 0\le j \le m.
\]
Then $\rho_0=\delta$, $\rho_m=\delta_{\mathrm{reg}}$ by
\eqref{eq:lift}, and
\[
 \rho_{j+1}=(-1)^{j+1}-\rho_j.
\]
By the definition of $m$ and $J_\delta^\pm$ we have
\[
 J_\delta^\pm=\{0\leq j<m:(-1)^j=\mp1\},
\]
and a direct calculation gives
\[
 -\frac{\rho_j^2\pm\rho_j}{2}
 +\frac{\rho_{j+1}^2\pm\rho_{j+1}}{2}
 =
 \begin{cases}
  2\delta+2j+1,&j\in J_\delta^\pm,\\
  0,&j\notin J_\delta^\pm.
 \end{cases}
\]
Summing over $j=0,\ldots,m-1$ and using
$\rho_0=\delta$ and $\rho_m=\delta_{\mathrm{reg}}$ yields
\[
 -\frac{\delta^2\pm\delta}{2}
 +\frac{\delta_{\mathrm{reg}}^2\pm\delta_{\mathrm{reg}}}{2}
 =
 \sum_{j\in J_\delta^\pm}(2\delta+2j+1),
\]
which is \eqref{eq:unfolding-identity}, since $\theta_\delta=\delta$.
\end{proof}

\section*{Acknowledgments and use of AI tools}

The key estimate \eqref{eq:loewner-tail}
was pointed out to us by ChatGPT~5.6~Pro (OpenAI); all proofs were
checked by the authors.


\begin{thebibliography}{99}

\bibitem{AlemanMontesSarafoleanu}
A. Aleman, A. Montes-Rodr\'iguez, and A. Sarafoleanu,
\emph{The eigenfunctions of the Hilbert matrix},
Constr. Approx. \textbf{36} (2012), no.~3, 353--374.
\href{https://doi.org/10.1007/s00365-012-9157-z}{doi:10.1007/s00365-012-9157-z}.

\bibitem{BasorEhrhardt}
E.~L. Basor and T. Ehrhardt,
\emph{Asymptotic formulas for determinants of a sum of finite Toeplitz and Hankel matrices},
Math. Nachr. \textbf{228} (2001), 5--45.
\href{https://doi.org/10.1002/1522-2616(200108)228:1<5::AID-MANA5>3.0.CO;2-E}{doi:10.1002/1522-2616(200108)228:1<5::AID-MANA5>3.0.CO;2-E}.

% \bibitem{BirmanPushnitski}
% M.~Sh. Birman and A.~B. Pushnitski,
% \emph{Spectral shift function, amazing and multifaceted},
% Integral Equations Operator Theory \textbf{30} (1998), no.~2, 191--199.
% \href{https://doi.org/10.1007/BF01238218}{doi:10.1007/BF01238218}.

\bibitem{DeiftItsKrasovsky}
P. Deift, A. Its, and I. Krasovsky,
\emph{Asymptotics of Toeplitz, Hankel, and Toeplitz+Hankel determinants with Fisher-Hartwig singularities},
Ann. of Math. (2) \textbf{174} (2011), no.~2, 1243--1299.
\href{https://doi.org/10.4007/annals.2011.174.2.12}{doi:10.4007/annals.2011.174.2.12}.

\bibitem{FedeleGebert}
E. Fedele and M. Gebert,
\emph{On determinants identity minus Hankel matrix},
Bull. Lond. Math. Soc. \textbf{51} (2019), no.~4, 751--764.
\href{https://doi.org/10.1112/blms.12271}{doi:10.1112/blms.12271}.

\bibitem{FrankPushnitski}
R.~L. Frank and A. Pushnitski,
\emph{The spectral density of a product of spectral projections},
J. Funct. Anal. \textbf{268} (2015), no.~12, 3867--3894.
\href{https://doi.org/10.1016/j.jfa.2015.03.018}{doi:10.1016/j.jfa.2015.03.018}.

\bibitem{Gebert2015}
M. Gebert,
\emph{The asymptotics of an eigenfunction-correlation determinant for Dirac-$\delta$ perturbations},
J. Math. Phys. \textbf{56} (2015), no.~7, article id.\ 072110.
\href{https://doi.org/10.1063/1.4927335}{doi:10.1063/1.4927335}.

\bibitem{GebertKuttlerMuller}
M. Gebert, H. K\"uttler, and P. M\"uller,
\emph{Anderson's Orthogonality Catastrophe},
Comm. Math. Phys. \textbf{329} (2014), no.~3, 979--998.
\href{https://doi.org/10.1007/s00220-014-1914-3}{doi:10.1007/s00220-014-1914-3}.

\bibitem{GebertKuttlerMullerOtte}
M. Gebert, H. K\"uttler, P. M\"uller, and P. Otte,
\emph{The exponent in the orthogonality catastrophe for Fermi gases},
J. Spectr. Theory \textbf{6} (2016), no.~3, 643--683.
\href{https://doi.org/10.4171/JST/135}{doi:10.4171/JST/135}.

\bibitem{GebertPoplavskyi}
M. Gebert and M. Poplavskyi,
\emph{On pure complex spectrum for truncations of random orthogonal matrices and Kac polynomials},
preprint (2019).
\href{https://arxiv.org/abs/1905.03154}{arXiv:1905.03154 [math.PR]}.

\bibitem{GharakhlooIts}
R. Gharakhloo and A. Its,
\emph{A Riemann-Hilbert approach to asymptotic analysis of Toeplitz+Hankel determinants},
SIGMA \textbf{16} (2020), Paper No.~100, 47~pp.
\href{https://doi.org/10.3842/SIGMA.2020.100}{doi:10.3842/SIGMA.2020.100}.

\bibitem{GharakhlooItsII}
R. Gharakhloo and A. Its,
\emph{A Riemann-Hilbert approach to asymptotic analysis of Toeplitz+Hankel determinants II},
Nonlinearity \textbf{39} (2026), no.~8, 085002.
\href{https://doi.org/10.1088/1361-6544/ae8d79}{doi:10.1088/1361-6544/ae8d79}.

\bibitem{Howland}
J.~S. Howland,
\emph{Spectral theory of self-adjoint Hankel matrices},
Michigan Math. J. \textbf{33} (1986), no.~2, 145--153.
\href{https://doi.org/10.1307/mmj/1029003344}{doi:10.1307/mmj/1029003344}.

\bibitem{KnoerrOtteSpitzer}
H.~K. Kn\"orr, P. Otte, and W. Spitzer,
\emph{Anderson's orthogonality catastrophe in one dimension induced by a magnetic field},
J. Phys. A \textbf{48} (2015), no.~32, 325202.
\href{https://doi.org/10.1088/1751-8113/48/32/325202}{doi:10.1088/1751-8113/48/32/325202}.

\bibitem{KostrykinMakarov}
V. Kostrykin and K.~A. Makarov,
\emph{On Krein's example},
Proc. Amer. Math. Soc. \textbf{136} (2008), no.~6, 2067--2071.
\href{https://doi.org/10.1090/S0002-9939-08-09141-7}{doi:10.1090/S0002-9939-08-09141-7}.

\bibitem{Magnus}
W. Magnus,
\emph{On the spectrum of Hilbert's matrix},
Amer. J. Math. \textbf{72} (1950), no.~4, 699--704.
\href{https://doi.org/10.2307/2372284}{doi:10.2307/2372284}.

\bibitem{Otte}
P. Otte,
\emph{A Szeg\H{o} limit theorem related to the Hilbert matrix},
Rocky Mountain J. Math. \textbf{54} (2024), no.~5, 1447--1472.
\href{https://doi.org/10.1216/rmj.2024.54.1447}{doi:10.1216/rmj.2024.54.1447}.

\bibitem{RosenblumII}
M. Rosenblum,
\emph{On the Hilbert matrix. II},
Proc. Amer. Math. Soc. \textbf{9} (1958), 581--585.
\href{https://doi.org/10.1090/S0002-9939-1958-0099599-2}{doi:10.1090/S0002-9939-1958-0099599-2}.

\bibitem{Silbermann}
B. Silbermann,
\emph{On the spectrum of Hilbert matrix operator},
Integral Equations Operator Theory \textbf{93} (2021), Paper No.~21.
\href{https://doi.org/10.1007/s00020-021-02637-5}{doi:10.1007/s00020-021-02637-5}.

% \bibitem{Wilf}
% H. S. Wilf,
% \emph{Finite sections of some classical inequalities},
% Ergebnisse der Mathematik und ihrer Grenzgebiete, Band~52,
% Springer-Verlag, Berlin--New York, 1970.
% \href{https://doi.org/10.1007/978-3-642-86712-5}{doi:10.1007/978-3-642-86712-5}.

\end{thebibliography}
\end{document}